\newtheorem{definition}{Definition}
\newtheorem{proposition}[definition]{Proposition}
\newtheorem{lemma}[definition]{Lemma}
\newtheorem{theorem}[definition]{Theorem}
\newtheorem{remark}[definition]{Remark}
\definecolor{darkred}{rgb}{0.8,0,0}
\newcommand{\nc}{\newcommand}
\def\mc{\mathcal}
\def\ox{\otimes}
\renewcommand\P[1]{\Pr(#1)} 
\nc{\pe}[3]{\epsilon^{#1}(#2,#3)}
\nc{\mM}[3]{M^{#1}_{#2}(#3)}
\def\NC{\rm{NC}}
\def\SE{\rm{SE}}
\def\NS{\rm{NS}}
\def\b{\ast} 
\def\ppv{\rm{PPV}} 
\def\mSet{\{1,\ldots,M\}} 
\def\tr{\mathrm{Tr}}
\def\mRV{W}
\def\dmRV{\hat{W}}
\def\mV{w}
\def\dmV{\hat{w}}
\def\inA{\mathsf{A}}
\def\outA{\mathsf{B}}
\def\code{\mathcal{Z}}
\def\chan{\mathcal{E}}
\def\src{\mathcal{S}}
\def\is{\!=\!} 
\def\tc{\mathsf{T}}
\def\jointtypes{\mathcal{P}_n(\inA \times \outA)}
\def\jtVar{\tau_{\inA \outA}}
\def\mb{\mathbf}
\def\hg{H} 
\begin{document}
	
\title{A linear program for the finite block length
converse of Polyanskiy-Poor-Verd\'{u}
via non-signalling codes}
\author{William~Matthews
  \thanks{William Matthews (will@northala.net)
is with the Institute for Quantum Computing at the University of Waterloo
and gratefully acknowledges the support of NSERC and QuantumWorks.} }

\maketitle
\begin{abstract}
	Motivated by recent work on
	entanglement-assisted codes for sending
	messages over classical channels,
	the larger, easily characterised class of
	\emph{non-signalling codes} is defined.
	Analysing the optimal performance of
	these codes yields an alternative proof of
	the finite block length converse of
	Polyanskiy, Poor and Verd\'{u},
	and shows that they achieve this converse.
	This provides an explicit formulation of
	the converse as a linear program
	which has some useful features.
	For discrete memoryless channels, it
	is shown that non-signalling codes
	attain the channel capacity with
	zero error probability
	if and only if the dispersion of
	the channel is zero.
\end{abstract}

\section{Introduction}
A key goal of information theory is to quantify the extent to
which reliable communication is possible over a noisy channel.
A code of size $M$ and block length $n$ allows communication
of one of $M$ messages via $n$ uses of the channel.
The fundamental tradeoff between these quantities and
the reliability of communication, is captured by
$\mM{}{\epsilon}{\mc{E}^n}$ - the largest size of code with
error probability $\epsilon$ (for equiprobable messages).
While emphasis is often placed on quantifying
asymptotics of the large $n$ limit (by computing channel capacities
or reliability functions, for example)
but this information isn't necessarily useful if one wishes
to compute a lower bound
on the block length needed for a certain
rate and error probability, for instance.

While actually computing $\mM{}{\epsilon}{\mc{E}^n}$ is intractable,
it is possible to obtain lower (achievability) and upper (converse)
bounds on it from which
the asymptotic quantities derived, but which also
give useful answers for questions concerning finite block lengths.
In their recent paper \cite{PPV}, Polyanskiy, Poor and Verd\'{u} prove a
very general converse bound (the `PPV converse' for the purposes
of this article)
\begin{align}\label{ppv-small}
	\mM{}{\epsilon}{\mc{E}^n} \leq
\mM{\ppv}{\epsilon}{\mc{E}^n},
\end{align}
where $\mM{\ppv}{\epsilon}{\mc{E}^n}$
is given by a maximin optimisation of the reciprocal of the
minimum type II error over a set of hypothesis tests.
They go on to show how many existing converse results can be
easily derived from theirs.

Recent work has shown that it can be advantageous in classical coding
over classical channels for the sender and receiver to share entangled
quantum systems \cite{CLMW,CLMW2,beigi,DSW,LMMOR}. While the capacity
cannot be increased, the number of messages possible for a given
error bound can be. Entanglement assistance can even increase the
zero-error capacity \cite{LMMOR}. This raises questions about the
extent to which entanglement can assist in general.

In an entanglement-assisted code, the
output of the decoder is conditionally independent of the
input to the encoder given the input to the decoder, and vice-versa.
A non-signalling (NS) code is any code with this property,
and $\mM{\NS}{\epsilon}{\mc{E}^n}$ is largest size of NS code
with error probability
$\epsilon$. Any upper bound $\mM{\NS}{\epsilon}{\mc{E}^n}$ clearly
applies to entanglement-assisted codes.

From the elegant proof of the PPV converse \cite{PPV}, it can be seen
that it also applies to non-signalling codes\footnote{My thanks to an
anonymous referee for pointing this out.}, that is
\begin{equation}
	\mM{\NS}{\epsilon}{\mc{E}^n} \leq \mM{\ppv}{\epsilon}{\mc{E}^n}.
\end{equation}
This fact, combined with lower bounds on $\mM{}{\epsilon}{\mc{E}^n}$,
provides quite stringent limits on the advantage
from entanglement assistance. Section \ref{background} precisely
defines the concepts and quantities of interest, and recaps
the proof of the PPV converse.

Section \ref{sec:PNS} analyses performance of non-signalling codes directly,
deriving a linear program for a quantity
$
	\mM{\b}{\epsilon}{\mc{E}^n}
$
whose integer part $\lfloor \mM{\b}{\epsilon}{\mc{E}^n} \rfloor$ is precisely
$
	\mM{\NS}{\epsilon}{\mc{E}^n}.
$
Clearly this quantity is an upper bound on
$\mM{}{\epsilon}{\mc{E}^n}$ and, as mentioned, no larger
than $\mM{\ppv}{\epsilon}{\mc{E}^n}$. Remarkably, it turns out that
$\mM{\b}{\epsilon}{\mc{E}^n}$ is precisely equal to
$\mM{\ppv}{\epsilon}{\mc{E}^n}$. This provides an alternative proof
of the PPV converse (for discrete channels),
which shows that it is \emph{achieved} by NS codes,
and provides primal and dual linear programs (LPs) for it,
which are useful for computing the bound:
The duality theorem for LPs means that any feasible point for the dual LP
gives a valid converse bound, and can allow for certification of optimality.
There is also an operationally intuitive way to use
symmetry of the channel to reduce the size of the linear programs,
from exponential to polynomial in $n$ in the case of discrete memoryless channels (DMCs).

Section \ref{dispersion} shows that DMCs where non-signalling codes
can attain the channel capacity with zero-error, are precisely those
with zero channel dispersion, and thus also admit particularly efficient
classical codes. The final section concludes with some suggested
directions for future research.

\section{Definitions and Background}\label{background}
We consider a single use of a discrete channel with
input alphabet $\inA$ and
output alphabet $\outA$.
The channel input and output
are random variables (RVs) $X$ and $Y$, respectively. 
Our description of the channel use $\mc{E}$
determines the probabilities
$
	\mc{E}(y|x) := \Pr(Y=y|X=x,\mc{E}).
$
A message $\mRV$ is selected from a set of $M$
possible messages $\{1, 2, \ldots, M \}$
by a source $\src$, which determines
the probabilities
$
	\src(w) := \Pr(W=w|\src).
$
A code $\code$ consists of an
encoder, which takes input
$\mRV$ and whose output is the channel input
$X$ from $\inA$,
and a decoder whose input is
the channel output
$Y$ (in $\outA$) and which produces a
decoding $\dmRV$ of the message.
The code $\code$ determines the probabilities
\begin{equation}
	\code(x,\dmV|\mV,y) := \Pr(X=x,\dmRV=\dmV|\mRV=\mV,Y=y,\code).
\end{equation}
An error has occurred if
$\mRV \neq \dmRV$.
\begin{remark}
	When considering $n$ uses of a channel,
the alphabets are $\inA^n$ and $\outA^n$ and the
channel use is $\mc{E}^n$, which gives the conditional
probabilities of output strings
$\mathbf{x} = x_1 \ldots x_n \in \inA^n$
given each input string
$\mathbf{y} = y_1 \ldots y_n \in \outA^n$.
A discrete channel is fully described by specifying
$\bm{\mc{E}} = \{ \mc{E}^n : n \in \mathbb{N} \}$.
A discrete memoryless channel (DMC), is one where
$\mc{E}^n(\mb{y}|\mb{x})
= \mc{E}^{\ox n}(\mb{y}|\mb{x}) := \prod_{i=1}^n \mc{E}(y_i|x_i)$,
for all $n$.
\end{remark}
In a classical code, the encoder and decoder
are uncorrelated, in the sense that
\begin{equation}
	\code(x,\dmV|\mV,y) = F(x|\mV)G(\dmV|y)
\end{equation}
for some conditional probability distributions $F$ and $G$.
This property defines the class $\NC$ of codes with
No Correlation (in the absence of a channel)
between the encoder and the decoder.
\begin{definition}
	$\SE$ (Shared Entanglement) is the class of
	\emph{entanglement-assisted codes} which
	can be implemented by local operations
	of the encoder and
	decoder on quantum systems (with finite Hilbert spaces)
	in a shared entangled state.
	
	A positive operator valued measure (POVM) $L$ for a Hilbert space
	$\mc{H}$, and finite set of outcomes $\mathsf{R}$, assigns
	positive (semidefinite) operators $L(r)$ on $\mc{H}$ to the
	outcomes $r \in \mathsf{R}$ such that
	$\sum_{r \in \mathsf{R}} L(r) = I$,
	where $I$ is the identity operator on $\mc{E}$. A code $\code$
	is in $\SE$ iff
	there exist finite dimensional Hilbert spaces $\mc{H}_A$ and $\mc{H}_B$,
	POVMs $D_{w}$ for $\mc{H}_A$,
	with outcomes in $\inA$ for $w \in \mSet$,
	POVMs $F_{y}$ for $\mc{H}_B$
	with outcomes in $\mSet$
	for $y \in \outA$,
	and a density operator
	$\rho_{AB}$ on $\mc{H}_A\ox \mc{H}_B$, such that
	\[
		\code(x,\hat{w}|w,y) = \tr E_{w}(x) \ox D_{y}(\hat{w}) \rho_{AB}.
	\]
\end{definition}
The class $\SE$ contains the class $\NC$, and is itself contained in
the class of \emph{non-signalling codes}:
\begin{definition}
	A non-signalling (NS) code is any
	one for which the marginal distribution of the output
	of the decoder is conditionally independent of the
	input to the encoder given the input to the decoder,
	and vice-versa. That is,
	for all $x \in \inA,y \in \outA,\mV,\dmV \in \mSet$,
	\begin{align}
		\P{\dmRV\is \dmV|\mRV \is \mV,Y\is y,\code}
		=& \P{\dmRV \is \dmV|Y \is y,\code},\label{NSAtoB0}\\
		\P{X\is x|W\is w,Y\is y,\code}
		=& \P{X\is x|W\is w,\code}.\label{NSBtoA0}
	\end{align}
	These conditions define the class $\NS$ of Non-Signalling
	assisted codes.
\end{definition}
From Bayes' rule and (\ref{NSBtoA0}),
\begin{equation}\label{AtoB-fac}
	\begin{split}
	&\code(x,\dmV|\mV,y)\\
	=& \P{\dmRV\is\dmV|\mRV\is \mV, Y\is y, X\is x,\code}
	p(x|w,\code).
	\end{split}
\end{equation}
where $p(x|w,\code) := \P{X\is x|\mRV\is \mV,\code}$.
This can be interpreted operationally as indicating that if
(\ref{NSBtoA0}) holds, then $\code$ could be implemented by
having the encoder stochastically generate $X$ according
to the value of $\mRV$, and then
send the values of $X$ and $\mRV$ to the decoder (using additional
communication) which would use these, in addition to $Y$, to
determine how to generate $\dmRV$.
Using (\ref{AtoB-fac}) and the fact that
\begin{align}\label{xy-given-w}
	\P{Y\is y,X\is x|W\is w,\code,\chan}
	= \mc{E}(y|x)p(x|w,\code)
\end{align}
it is easy to show that
\begin{equation}\label{cps}
	\begin{split}
		&\P{\dmRV=\dmV,Y=y,X=x|W=w,\chan,\code,\src}\\
		=&\code(x,\hat{w}|w,y)\chan(y|x).
	\end{split}
\end{equation}
\begin{proposition}\label{ns-fdbk}
	The conditional probabilities $(\ref{cps})$
	are clearly non-negative.
	To form a valid conditional distribution, they must also satisfy
	\begin{align}\label{vpmf}
		\forall w: \sum_{\hat{w},x,y} \code(x,\hat{w}|w,y)\mc{E}(y|x)
		= 1.
	\end{align}
	This is true for all channels $\mc{E}$
	if and only if
	$\code$ is non-signalling from the receiver to the sender
	(this is the condition expressed by (\ref{NSBtoA0})).
\end{proposition}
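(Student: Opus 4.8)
The plan is to reduce the normalisation condition (\ref{vpmf}) to a statement about a single quantity and then argue both implications. Summing (\ref{cps}) over $\dmV$ collapses the decoder: setting $c(x|\mV,y) \defeq \sum_{\dmV}\code(x,\dmV|\mV,y) = \P{X\is x|\mRV\is \mV,Y\is y,\code}$, condition (\ref{vpmf}) becomes the requirement that $\sum_{x,y} c(x|\mV,y)\,\mc{E}(y|x) = 1$ for every $\mV$, while (\ref{NSBtoA0}) is exactly the assertion that $c(x|\mV,y)$ is independent of $y$. I would also record that $\sum_x c(x|\mV,y) = 1$ for each fixed $y$, since $c(\cdot|\mV,y)$ is a conditional distribution on $\inA$ (part of $\code$ being a valid code).

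For the ``if'' direction I would simply substitute: if $c(x|\mV,y)=p(x|\mV,\code)$ for all $y$, then $\sum_{x,y} c(x|\mV,y)\mc{E}(y|x) = \sum_x p(x|\mV,\code)\sum_y \mc{E}(y|x) = \sum_x p(x|\mV,\code) = 1$, which uses only that each row $\mc{E}(\cdot|x)$ of the channel is normalised, and hence holds for every $\mc{E}$.

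For the ``only if'' direction --- the only part that needs an idea --- the plan is to feed the hypothesised identity a family of channels engineered to isolate the $y$-dependence of $c$. Assuming $|\outA|\geq 2$ (the case $|\outA|=1$ makes both sides of the claim vacuous), I would fix a reference symbol $y_0\in\outA$, and for arbitrary $x_1\in\inA$ and $y_1\neq y_0$ apply the hypothesis to the deterministic channel $\mc{E}(y|x)=\mathbbm{1}[y=f(x)]$ with $f(x_1)=y_1$ and $f(x)=y_0$ for $x\neq x_1$, and also to the constant channel given by $x\mapsto y_0$. Since $\sum_{x,y}c(x|\mV,y)\mc{E}(y|x) = \sum_x c(x|\mV,f(x))$ for a deterministic channel, the two resulting equations differ only in the $x_1$ term, and subtracting them leaves $c(x_1|\mV,y_1)=c(x_1|\mV,y_0)$. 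As $x_1,y_1$ are arbitrary, $c$ does not depend on $y$, which is exactly (\ref{NSBtoA0}).

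I do not anticipate a genuine obstacle: the crux is just the realisation that deterministic channels already suffice as ``probes'' of the universally-quantified identity. The points to handle with a little care are the degenerate $|\outA|=1$ case and the implicit assumption that a code fixes, for each $(\mV,y)$, an honest probability distribution over $(x,\dmV)$, so that the marginals $c(\cdot|\mV,y)$ are normalised. A stylistic alternative for the converse would be to compare two channels that differ only in the row indexed by $x_1$; the identity then forces $c(x_1|\mV,\cdot)$ to annihilate every zero-sum vector on $\outA$, hence to be constant, with the same conclusion.
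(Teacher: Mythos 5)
Your proof is correct and follows essentially the same route as the paper: the ``if'' direction is the same substitution using only row-normalisation of $\mc{E}$, and for the ``only if'' direction the paper probes the identity with exactly your family of deterministic channels (one distinguished input sent to $y_0$, all other inputs to $y_1$), merely phrased contrapositively --- a signalling violation at $(x',y_0,y_1)$ makes $\sum_{\hat{w},x,y}\code(x,\hat{w}|w',y)\mc{E}(y|x)$ exceed $1$ for that single channel, whereas you compare two such channels and subtract, both arguments resting on the same normalisation fact $\sum_{x,\hat{w}}\code(x,\hat{w}|w,y)=1$. Your extra care with the $|\outA|=1$ case is harmless but not needed, since in that case signalling from receiver to sender is impossible and the claim is vacuously true under either formulation.
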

\begin{proof}
	(\ref{vpmf}) is a straightforward consequence of (\ref{NSBtoA0}) via
	(\ref{AtoB-fac}). For the other direction,
	if $\code$ is signalling from Bob to Alice then
	there exist $w' \in \mSet$, $x' \in \inA$ and $y_0, y_1 \in \outA$
	such that
	$\sum_{\hat{w}}\code(x',\hat{w}|w',y_0)
	> \sum_{\hat{w}}\code(x',\hat{w}|w',y_1)$.
	Choosing the channel $\mc{E}$ with
	$
		\mc{E}(y_0|x') = 1
	$
	and, for all $x \neq x'$,
	$
		\mc{E}(y_1|x) = 1,
	$
	\begin{equation}\sum_{x,\hat{w}} \mc{E}(y_0|x)\code(x,\hat{w}|w',y_0)
		> \sum_{x,\hat{w}} \mc{E}(y_0|x)\code(x,\hat{w}|w',y_1)
	\end{equation}
	Since $\forall x: \mc{E}(y_0|x)
	= 1 - \mc{E}(y_1|x)$,
	this implies that
	\begin{align}
		\sum_{\hat{w},x,y} \code(x,\hat{w}|w',y) \mc{E}(y|x) > 1.
	\end{align}
\end{proof}
For the rest of this paper, the source is taken to be $\src_M$,
which assigns equal probability to each message: $\forall w: S_M(w) = 1/M$.

\begin{definition}\label{code-params}
For channel $\mc{E}$, the minimum average probability of error which
can be achieved by a code in class $\Omega$ is
\begin{equation*}
	\pe{\Omega}{M}{\mc{E}}
	:= \min \{ \Pr(\mRV\!\neq\!\dmRV|\code,\mc{E},\src_M)
	: \code \text{ in } \Omega\}
\end{equation*}
and the largest local code with error no larger than $\epsilon$ has
size
\begin{equation*}
	\mM{\Omega}{\epsilon}{\mc{E}}
	:= \max \{ M :
	\Pr(\mRV\!\neq\!\dmRV|\code,\mc{E},\src_M) \leq \epsilon,
	\code \text{ in } \Omega\}.
\end{equation*}
When the superscript $\Omega$ is omitted, it is intended that $\Omega = \NC$.
\end{definition}
\begin{remark}
	By the inclusions of the classes of codes,
	\begin{equation}
		\pe{}{M}{\mc{E}}
		\geq \pe{\SE}{M}{\mc{E}}
		\geq \pe{\NS}{M}{\mc{E}},
	\end{equation}
	and
	\begin{equation}
		\mM{}{\epsilon}{\mc{E}}
		\leq \mM{\SE}{\epsilon}{\mc{E}}
		\leq \mM{\NS}{\epsilon}{\mc{E}}.
	\end{equation}
\end{remark}
\begin{remark}\label{useless}
	Note that if $\mc{E}(y|x) = q(y)$ then, using (\ref{NSAtoB0}),
\begin{equation}
	\begin{split}
	&\pe{\NS}{M}{\mc{E}}
	= 1- \frac{1}{M}\sum_{w,x,y}\code(x,w|w,y)q(y)\\
	&= 1- \frac{1}{M}\sum_{w,y}\P{\hat{W}=w|Y=y}q(y) = 1 - 1/M.
	\end{split}
\end{equation}
\end{remark}
$\pe{}{M}{\mc{E}}$ or $\mM{}{\epsilon}{\mc{E}}$ are in general both hard to
compute and to
analyse. This motivates the desire for bounds on these quantities which
are more amenable to computation and/or analysis. Many previously
established converse results can derived from the following result
of Polyanskiy, Poor and Verd\'{u}:

\def\PDs{\mathcal{P}}
\def\gSet{\mathsf{C}}

\begin{definition}
	For a finite set $\gSet$, let $\PDs(\gSet)$ denote the set of
	probability distributions on $\gSet$.
	Given distributions $P^{(0)}, P^{(1)} \in \PDs(\gSet)$,
	(and identifying $P^{(0)}$ with the null hypothesis)
	let
	$\beta_{1-\epsilon}(P^{(0)},P^{(1)})$
	denote the
	minimum type II error
	$\sum_{r \in \gSet} T_r P^{(1)}(r)$
	that can be achieved by statistical tests
	$T$ which give a type I error no greater than $\epsilon$, i.e.
	$\sum_{r \in \mathsf{C}} T_r P^{(0)}(r) \geq 1 - \epsilon$.
\end{definition}
\begin{theorem}[PPV converse - Theorem 27 of \cite{PPV}]\label{PPV}
	The number of messages which can be transmitted by an NS code with
	probability no greater than $\epsilon$ obeys $\mM{\NS}{\epsilon}{\mc{E}}
	\leq \mM{\ppv}{\epsilon}{\mc{E}}$ where
	\begin{align}
		\mM{\ppv}{\epsilon}{\mc{E}}
		:= 
		\displaystyle\max_{P_{X}\in \PDs(\inA)} \min_{Q_{Y}\in \PDs(\outA)}
		\frac{1}{
		\beta_{1-\epsilon}(P_{XY},P_{X}\times Q_{Y})
		}
	\end{align}
	with $P_{XY}(x,y) := P_X(x)\mc{E}(y|x)$.
\end{theorem}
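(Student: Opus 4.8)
The plan is to adapt the meta-converse argument of \cite{PPV} to non-signalling codes. It suffices to show that every NS code $\code$ of size $M$ with error probability at most $\epsilon$ on $\mc{E}$ satisfies $M\le\mM{\ppv}{\epsilon}{\mc{E}}$; the supremum over such codes then yields the theorem. Fix such a $\code$ and set $P_X(x):=\frac{1}{M}\sum_w p(x|w,\code)$ and $P_{XY}(x,y):=P_X(x)\mc{E}(y|x)$, as in the statement. For an arbitrary $Q_Y\in\PDs(\outA)$ I will produce a single statistical test on $\inA\times\outA$ certifying $\beta_{1-\epsilon}(P_{XY},P_X\times Q_Y)\le 1/M$; since $Q_Y$ is arbitrary this gives $\min_{Q_Y}1/\beta_{1-\epsilon}(P_{XY},P_X\times Q_Y)\ge M$, hence $\mM{\ppv}{\epsilon}{\mc{E}}\ge M$.

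The test is the ``decoding-success functional''
\[
  T(x,y):=\frac{1}{M\,P_X(x)}\sum_{w\in\mSet}\code(x,w|w,y),
\]
with the convention $T(x,y):=0$ when $P_X(x)=0$ (in which case $p(x|w,\code)=0$ for all $w$, so the numerator vanishes by (\ref{AtoB-fac})). Writing (\ref{NSBtoA0}) as $\sum_{\dmV}\code(x,\dmV|w,y)=p(x|w,\code)$ gives $\sum_w\code(x,w|w,y)\le\sum_w p(x|w,\code)=M\,P_X(x)$ for every $y$, so $0\le T\le 1$ and $T$ is a legitimate test. Its defining property is that for \emph{every} channel $\mc{F}$ from $\inA$ to $\outA$,
\[
  \sum_{x,y}T(x,y)P_X(x)\mc{F}(y|x)=\frac1M\sum_{w,x,y}\code(x,w|w,y)\mc{F}(y|x)=\P{\dmRV\is\mRV|\code,\mc{F},\src_M},
\]
the last equality being (\ref{cps}) summed over $(w,x,y)$ with weight $\src_M(w)=1/M$; in words, probing $T$ with a channel $\mc{F}$ and the input law $P_X$ returns exactly the decoding success probability of $\code$ on $\mc{F}$.

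Taking $\mc{F}=\mc{E}$, the left-hand side is $\sum_{x,y}T(x,y)P_{XY}(x,y)$ and the right-hand side is the success probability of $\code$ on $\mc{E}$, which is at least $1-\epsilon$ by hypothesis, so $T$ meets the type I constraint in the definition of $\beta_{1-\epsilon}(P_{XY},P_X\times Q_Y)$. Taking instead $\mc{F}$ to be the constant channel $\mc{F}(y|x):=Q_Y(y)$, the left-hand side becomes the type II error $\sum_{x,y}T(x,y)P_X(x)Q_Y(y)$, and the right-hand side is the success probability of the NS code $\code$ on a channel whose output is independent of its input, which by the computation in Remark~\ref{useless} (which in fact shows that every NS code of size $M$ decodes correctly with probability exactly $1/M$ on such a channel) equals $1/M$. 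Hence $\beta_{1-\epsilon}(P_{XY},P_X\times Q_Y)\le 1/M$, and the reduction of the first paragraph finishes the proof.

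The substantive step is the choice of $T$. The obvious test ``run the decoder on $Y$ and accept iff $\dmRV\is\mRV$'' lives on the larger sample space of $(\mRV,X,Y,\dmRV)$, and collapsing it to $\inA\times\outA$ by data processing moves $\beta$ in the unhelpful direction, so one must instead use the functional $T$, which already lives on $\inA\times\outA$ and depends linearly on the channel. The only place non-signalling is used in the type II bound is (\ref{NSAtoB0}) (packaged inside Remark~\ref{useless}): it forces $\sum_w\P{\dmRV\is w|Y\is y,\code}=1$, hence makes the type II error exactly $1/M$ rather than merely bounding it by some weaker quantity --- the general-decoder analogue of the fact that the decoding regions of an ordinary code partition $\outA$. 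A signalling code satisfies no such identity (it may decode with vanishing error for arbitrarily many messages), which is why $\NS$ is the widest class of codes for which the PPV bound can hold.
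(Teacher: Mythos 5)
Your proposal is correct and is essentially the paper's own meta-converse argument: the test is the posterior success probability $\Pr(W=\hat{W}\mid X=x,Y=y)$, non-signalling from receiver to sender (Proposition \ref{ns-fdbk}) makes it a valid test whose pairing with any channel $\mc{F}$ returns the code's success probability on $\mc{F}$, and non-signalling from sender to receiver (Remark \ref{useless}) pins the type II error against the useless channel $Q_Y$ at exactly $1/M$, exactly as in the paper's choice $\chan_0=\chan$, $\chan_1(y|x)=Q_Y(y)$. Incidentally, your explicit formula $T(x,y)=\frac{1}{M P_X(x)}\sum_{w}\code(x,w|w,y)$ is the correct evaluation of that conditional probability (and is what makes the identity $\sum_{x,y}T_{xy}P^{(i)}_{XY}(x,y)=1-\epsilon_i$ hold), whereas the paper's displayed expression $\sum_{w}\code(x,w|w,y)/(M\,p(x|w,\code))$ coincides with it only when $p(x|w,\code)$ does not depend on $w$, so your version quietly repairs a small slip in the printed proof.
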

\begin{proof}
	Define two hypotheses $H_0$ and $H_1$
	to explain the data $X = x, Y = y$:
	In both $\code$ is an NS code of size $M$,
	but in $H_i$ the channel is $\mc{E}_i$.
	Let $\epsilon_i$ denote the error probability
	attained by the code if the channel is $\mc{E}_i$:
	\begin{equation}
		1-\epsilon_i = \P{W=\hat{W}|\code,\chan_i,\src_M}
		= \sum_{x,y} T_{xy} P^{(i)}_{XY}(x,y)
	\end{equation}
	where
	\begin{align}
	P^{(i)}_{XY}(x,y)
	:=& \P{X=x,Y=y|\code,\chan_i,\src_M}\\
	=& \chan_i(y|x) \P{X=x|\code,\src_M}
	\end{align}
	and
	\begin{align}
	T_{xy} :=& \P{W=\hat{W}|X=x,Y=y,\code,\src_{M}}
	\end{align}
	which, using (\ref{xy-given-w}) and (\ref{cps}), is
	\begin{equation}
	T_{xy} = \sum_{w=1}^M \frac{\code(x,w|w,y)}{M p(x|w,\code)}.
	\end{equation}
	The direct part of Proposition \ref{ns-fdbk} guarantees that this
	is a valid statistical test, which proves that
	\begin{equation}
		\beta_{1-\epsilon_0}(P^{(0)}_{XY},P^{(1)}_{XY}) \leq 1 - \epsilon_1.
	\end{equation}
	Setting $\chan_0 = \chan$ and $\chan_1(y|x) = Q_{Y}(y)$
	(see Remark \ref{useless}) shows that, for
	any NS code with $\Pr(X=x|\code,\src_M) = P_{X}(x)$,
	$\epsilon$ and $M$ must satisfy
	\begin{align}
		\max_{Q_{Y}} \beta_{1-\epsilon}(P_{XY},P_{X}\times Q_{Y})
		\leq \frac{1}{M},
	\end{align}
	and therefore,
	\begin{align}
		\min_{P_{X}} \max_{Q_{Y}}
		\beta_{1-\epsilon}(P_{XY},P_{X}\times Q_{Y})
		\leq \frac{1}{\mM{\NS}{\epsilon}{\mc{E}}}.
	\end{align}
\end{proof}
\begin{definition}
	For codes in class $\Omega$,
	the $\epsilon$-error capacity of $\bm{\mc{E}}$
	\begin{align}
		C_{\epsilon}^{\Omega}(\bm{\mc{E}})
		:= \lim_{n\to\infty}
		\frac{1}{n} \log \mM{\Omega}{\epsilon}{\mc{E}^n}
	\end{align}
	and the capacity is
	\begin{align}
		C^{\Omega}(\bm{\mc{E}})
		:= \lim_{\epsilon\to 0} C_{\epsilon}^{\Omega}(\bm{\mc{E}}).
	\end{align}
\end{definition}
The fact that the PPV converse applies to NS codes has some
immediate consequences:
\begin{remark}
	Since the information spectrum converse that Verd\'{u} and Han
	use to derive their general formula for channel capacity
	\cite{VerduHan} can be
	derived from the PPV converse, this formula also gives the
	capacity for NS codes.
\end{remark}
\begin{remark}
	For DMCs, a converse derived from the PPV converse
	and an achievability bound for classical codes,
	can be used to prove \cite{PPV}
	the result of Strassen \cite{strassen1962asymptotische},
	\begin{equation}\label{strassen}
		\log \mM{}{\epsilon}{\mc{E}^{\ox n}}
		= n C - \sqrt{n V} Q^{-1}(\epsilon) + O(\log n),
	\end{equation}
	where $C$ is the channel capacity,
	$V$ is the \emph{channel dispersion} (see Section \ref{dispersion})
	and $Q(x) := (2 \pi)^{-1/2} \int_{x}^{\infty} e^{-t^{2}/2} dt$.
	
	Since the PPV converse also applies to NS codes,
	(\ref{strassen}) applies to these too,
	and the difference in the rates achieved by classical and
	NS codes (for fixed $\epsilon$) is only of
	order $O(\log n)/n$.
\end{remark}
\section{The performance of non-signalling codes}\label{sec:PNS}
The optimisation over codes that yields $\pe{\NS}{M}{\mc{E}}$ in
Definition (\ref{code-params}) is already a linear program (LP):
The variable is the code $\code$
(considered as a $|\inA||\outA|M^2$ dimensional real vector),
the objective function $\Pr(\mRV\!\neq\!\dmRV|\code,\mc{E},\src_M)$ is
\begin{equation}
	1 - \frac{1}{M}\sum_{w,x,y} \mc{E}(y|x)\code(x,w|w,y),
\end{equation}
and the constraints are simply the linear equalities comprising the
non-signalling conditions (\ref{NSAtoB0}) and (\ref{NSBtoA0}),
in addition to the non-negativity and normalisation of $\code$.

\begin{figure}[ht]
	\begin{tikzpicture}[scale=1.2]
		\node (input) at (-3.1,1) {$\mRV$};
		\node (aperm) at (-2.4,1) {};
		\node (aplab) at (-1.7,1.3) {$\pi(W)$};
		\node (abox) at (-0.9,1) {};
		\node[rectangle, draw, thick, fill=white, inner sep=3mm]
		(chan) at (0,0) {$\mc{E}$};
		\node (bbox) at (1,-1) {};
		\node (bplab) at (1.8,-.7) {$\pi(\hat{W})$};
		\node (bperm) at (2.7,-1) {};
		\node (output) at (3.6,-1) {$\dmRV$};
		\draw[draw=black, style=double]
		(input) -- (abox) -- (-.5,1) -- (chan);
		\draw[draw=black, style=double]
		(chan) -- (.5,-1) -- (bbox) -- (output);
		\node[rectangle, draw, thick, fill=black!20, inner sep=3mm]
		(abox2) at (abox) {$e$
		};
		\node[rectangle, draw, thick, fill=white, inner sep=2mm]
		(aperm2) at (aperm) {$\pi$
		};
		\node[rectangle, draw, thick, fill=black!20, inner sep=3mm]
		(bbox2) at (bbox) {$d$
		};
		\node[rectangle, draw, thick, fill=white, inner sep=2mm]
		(bperm2) at (bperm) {$\pi^{-1}$
		};
		\node[above] at (chan.north) {$X$};
		\node[below] at (chan.south) {$Y$};
	\end{tikzpicture}
  \centering
  \caption{Operational interpretation of the code $\bar{\code}$
which results from the symmetrisation (\ref{sym}) of a non-signalling code.
The boxes marked `$e$' and `$d$' are the encoder and decoder for the original
non-signalling code $\code$.
The permutations are coordinated by a shared random variable $\pi$
drawn uniformly at random from the symmetric group on $\mSet$.}
\label{fig:sym}
\end{figure}

If $\code$ is an NS code, then let
\begin{equation}\label{sym}
	\bar{\code}(x,\dmV|\mV,y)
	= \frac{1}{|G|} \sum_{\pi \in G} \code(x,\pi(\dmV)|\pi(\mV),y)
\end{equation}
where $G$ is the symmetric group on $\mSet$, $\pi(\mV)$ denotes the
action of a permutation in $G$ on $\mV \in \mSet$.
This symmetrized code $\bar{\code}(x,\dmV|\mV,y)$ has an operational
interpretation given in Fig \ref{fig:sym}, from which it is clear that it
is also non-signalling and since
\begin{align}
	&\P{W=\hat{W}|\bar{\code},\chan,\src_M}\\
	=&\frac{1}{|G|} \sum_{x,y} \sum_{\pi \in G} \sum_w
	\mc{E}(y|x)\code(x,\pi(\mV)|\pi(\mV),y)\\
	=&\frac{1}{|G|} \sum_{\pi \in G} \sum_{x,y} \sum_w
	\mc{E}(y|x)\code(x,\mV|\mV,y)\\
	=& \P{W=\hat{W}|\code,\chan,\src_M},
\end{align}
the optimisation over NS codes for $\pe{\NS}{M}{\chan}$,
can be restricted to symmetrized codes.
These are precisely those codes with the form
\begin{equation}
	\code(x,\dmV|\mV,y) = \begin{cases}
							R_{xy} &\text{ if } \dmV = \mV,\\
							Q_{xy} &\text{ if } \dmV \neq \mV.
						\end{cases}
\end{equation}
In these terms, the non-signalling condition (\ref{NSBtoA0}) is equivalent
to saying that there exists $p: \inA \to \mathbb{R}$ such that
$
	R_{xy} + (M-1)Q_{xy} = p(x),
$
and so
\begin{equation}
	\code(x,\dmV|\mV,y) = \begin{cases}
							R_{xy} &\text{ if } \dmV = \mV,\\
							(p(x) - R_{xy})/(M-1) &\text{ if }
							\dmV \neq \mV.
						\end{cases}
\end{equation}
With this simplification, the conditional
probabilities in $\code$ are non-negative iff $R_{xy} \geq 0$ and
$p(x) \geq R_{xy}$ for all $x,y$, and the normalisation condition
$\forall {w,y}: \sum_{x,\hat{w}} \code(x,\hat{w}|w,y) = 1$
is equivalent to $\sum_{x} {p(x)} = 1$.
The condition (\ref{NSAtoB0}) of no signalling from encoder to decoder is
$\forall y: \sum_{x} {R_{xy}} = \sum_{x}(p(x) - R_{xy})/(M-1)$
which, in light of the normalisation condition, is equivalent to
\begin{equation}\label{Rineq}
	\forall y: \sum_{x} R_{xy} = 1/M.
\end{equation}
Given a feasible point with, 
$\sum_{x\in \inA} R_{xy'} < 1/M$ for some $y' \in \outA$,
for any $\lambda \in [0,1]$,
\begin{equation}
	R'_{xy} = \begin{cases}
	(1-\lambda)R_{xy} + \lambda p(x) &\text{ for } y = y',\\
	R_{xy} &\text{ otherwise}
	\end{cases}
\end{equation}
is also feasible and there must exist $\lambda$ s.t.
$\sum_{x\in \inA} R_{xy'} = 1/M$.
Since $R'_{xy} \geq R_{xy}$ for all $x,y$ this can only be an
improvement on the original point, so (\ref{Rineq}) can be changed
to an inequality, to obtain
\begin{proposition}\label{errorPrimal}
\begin{align}
	1 - \pe{\NS}{M}{\mc{E}} =
	&\max \sum_{x \in \inA}\sum_{y \in \outA} \mc{E}(y|x) R_{xy}
	\label{pObj}\\
	&\text{subject to}\\
	\forall y \in \outA:&\sum_{x\in \inA} R_{xy} \leq 1/M,\label{RMeq}\\
	\forall x \in \inA, y \in \outA :&p(x) \geq R_{xy},\label{Ru}\\
	&\sum_{x \in \inA} p(x) = 1,\label{RuNorm}\\
	\forall x \in \inA, y \in \outA :&R_{xy} \geq 0, p_{x} \geq 0\label{pos}.
\end{align}
\end{proposition}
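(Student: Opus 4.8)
The plan is to assemble the pieces already laid out in this section. By Definition~\ref{code-params} and the opening remarks, $1 - \pe{\NS}{M}{\mc{E}}$ is the maximum of $\frac{1}{M}\sum_{w,x,y}\mc{E}(y|x)\code(x,w|w,y)$ over codes $\code$ obeying non-negativity, normalisation, and the non-signalling conditions (\ref{NSAtoB0})--(\ref{NSBtoA0}), which is a linear program. Since both the objective and the feasible region are invariant under the symmetrisation $\code\mapsto\bar\code$ of (\ref{sym})---the displayed computation shows the error probability is preserved, and Fig.~\ref{fig:sym} shows $\bar\code$ is still non-signalling---I may restrict the optimisation to symmetric codes, i.e.\ those of the form $\code(x,\dmV|\mV,y)=R_{xy}$ when $\dmV=\mV$ and $Q_{xy}$ otherwise.

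Next I would rewrite the constraints in terms of $(R,Q)$. Condition (\ref{NSBtoA0}) is exactly the existence of $p\colon\inA\to\mathbb{R}$ with $R_{xy}+(M-1)Q_{xy}=p(x)$, which lets me eliminate $Q$ and turns non-negativity of $\code$ into $R_{xy}\ge 0$ and $p(x)\ge R_{xy}$, normalisation into $\sum_x p(x)=1$, and condition (\ref{NSAtoB0}) (after using normalisation) into the column sums $\sum_x R_{xy}=1/M$ for every $y$. The objective collapses to $\sum_{x,y}\mc{E}(y|x)R_{xy}$, since the summand $\mc{E}(y|x)R_{xy}$ does not depend on $w$ and the sum over the $M$ values of $w$ cancels the $1/M$.

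It remains to argue that the equalities $\sum_x R_{xy}=1/M$ of (\ref{Rineq}) may be relaxed to the inequalities (\ref{RMeq}) without changing the optimal value. Given a feasible point for the relaxed program with $\sum_x R_{xy'}<1/M$ for some $y'$, replace the $y'$-column by $R'_{xy'}=(1-\lambda)R_{xy'}+\lambda p(x)$: this stays feasible because $0\le R'_{xy'}\le p(x)$ (a convex combination of two numbers in $[0,p(x)]$) and no other constraint touches the $y'$-column, while $\sum_x R'_{xy'}$ increases continuously from $\sum_x R_{xy'}<1/M$ at $\lambda=0$ to $\sum_x p(x)=1\ge 1/M$ at $\lambda=1$; by the intermediate value theorem some $\lambda$ restores equality, and since $R'_{xy'}\ge R_{xy'}$ the objective does not decrease. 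Iterating over columns shows every feasible point of the relaxed program can be moved to one of the original program with no smaller objective, so the maxima coincide and (\ref{pObj})--(\ref{pos}) follows. The only delicate point is this last step---specifically, verifying that tightening one column cannot break any remaining constraint, in particular $p(x)\ge R_{xy}$---which the convex-combination observation settles.
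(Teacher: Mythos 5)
Your proposal is correct and follows essentially the same route as the paper: symmetrise over the message permutations as in (\ref{sym}), rewrite the non-signalling, positivity and normalisation constraints in terms of $R_{xy}$ and $p(x)$ to get the equality version (\ref{Rineq}), and then relax the equality to the inequality (\ref{RMeq}) via the convex-combination/monotonicity argument. Your added detail (the intermediate-value-theorem choice of $\lambda$ and the check that $R'_{xy'}\le p(x)$) just makes explicit what the paper asserts tersely.
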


Introducing Lagrange multipliers $D_{xy}$, $z_{y}$, $\alpha$ for the
constraints (\ref{Ru}), (\ref{RMeq}), (\ref{RuNorm}) respectively,
the Lagrangian function is
\begin{align}
	&\sum_{x\in \inA}\sum_{y\in \outA} \mc{E}(y|x) R_{xy} 
	+\sum_{x \in \inA}\sum_{x \in \outA} D_{xy} (p(x) - R_{xy})\\
	+& \sum_{y \in \outA}z_{y}
	\left(\frac{1}{M} - \sum_{x \in \inA} R_{xy}\right)
	+ \alpha\left(1 - \sum_{x\in \inA} p(x)\right)\\
	=&\sum_{x \in \inA}\sum_{x \in \outA}
	R_{xy} \left(\mc{E}(y|x) - D_{xy} - z_y\right)\\
	+&\sum_{x \in \inA} p(x) \left(\sum_{y \in \outA} D_{xy} - \alpha\right)
	+\alpha + \frac{1}{M} \sum_{y \in \outA} z_y.
\end{align}
Taking the supremum over non-negative $R$ and $u$ and restricting the
multipliers to the region where it is finite yields the dual LP,
whose solution is equal to that of the primal LP by the strong duality
theorem for linear programming:
\begin{align}
	\pe{\NS}{M}{\mc{E}} =
	&\max \left( 1 - \alpha - \frac{1}{M} \sum_{y \in \outA} z_y \right)\\
	&\text{{\rm subject to}}\\
	\forall x \in \inA, y \in \outA :&\mc{E}(y|x) \leq D_{xy} + z_y,\\
	\forall x \in \inA:&\sum_{y \in \outA} D_{xy} \leq \alpha,\\
	\forall x \in \inA, y \in \outA :&D_{xy} \geq 0.
\end{align}
Fixing $z$, one should pick $D_{xy} = \max \{\mc{E}(y|x) - z_y, 0 \}$ and
$\alpha = \max_{x \in \inA} \sum_{y \in \outA} D_{xy}$ so that the objective
function is
\begin{align}
&1 - \max_{x\in \inA} \sum_{y \in \outA} \max \{\mc{E}(y|x) - z_y, 0 \}
-\frac{1}{M} \sum_{y \in \outA} z_y \\
=& \min_{x \in \inA} \sum_{y \in \outA}
(\mc{E}(y|x) -\max \{\mc{E}(y|x) - z_y, 0 \} )
-\frac{1}{M} \sum_{y \in \outA} z_y \\
=& \min_{x \in \inA} \sum_{y \in \outA} \min \{\mc{E}(y|x), z_y \}
- \frac{1}{M} \sum_{y \in \outA} z_y.
\end{align}
It remains to maximise over $z$:
\begin{proposition}\label{errorDualSimp}
	The minimum error probability which can be attained by an NS code is
	\begin{equation*}
	\pe{\NS}{M}{\mc{E}} = \max_{z} \min_{x \in \inA} \sum_{y\in \outA}
	( \min\left\{ z_y, \mc{E}(y|x) \right\} - z_y / M ).
	\end{equation*}
\end{proposition}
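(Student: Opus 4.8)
The plan is simply to collect the Lagrangian computation already carried out above, supplying justification at the three points where it is needed. First, start from the primal linear program of Proposition~\ref{errorPrimal} for $1 - \pe{\NS}{M}{\mc{E}}$, attach multipliers $D_{xy} \ge 0$ to the constraints~(\ref{Ru}), $z_y \ge 0$ to the constraints~(\ref{RMeq}), and a free multiplier $\alpha$ to the normalisation~(\ref{RuNorm}), and take the supremum of the Lagrangian over $R \ge 0$ and $p \ge 0$. That supremum is finite exactly when the coefficient of each $R_{xy}$ is non-positive, i.e.\ $\mc{E}(y|x) \le D_{xy} + z_y$, and the coefficient of each $p(x)$ is non-positive, i.e.\ $\sum_y D_{xy} \le \alpha$; restricting the multipliers to that region gives the dual program displayed above. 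The one thing to verify here is that strong duality applies: the primal is feasible (take $R \equiv 0$ and $p$ any distribution) and its objective is bounded above by $\sum_x p(x) \sum_y \mc{E}(y|x) = 1$, so by the strong duality theorem for linear programs the dual optimum equals $1 - \pe{\NS}{M}{\mc{E}}$, which is the identity $\pe{\NS}{M}{\mc{E}} = \max(1 - \alpha - \frac1M \sum_y z_y)$.

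Next, I would carry out the elimination of $D$ and $\alpha$ for fixed $z$ exactly as in the text: feasibility forces $D_{xy} \ge \max\{\mc{E}(y|x) - z_y, 0\}$, and since $\alpha$ enters the objective with a minus sign and the constraints only through $\alpha \ge \sum_y D_{xy}$, it is optimal to take $D_{xy} = \max\{\mc{E}(y|x) - z_y, 0\}$ and $\alpha = \max_x \sum_y D_{xy}$. Substituting and using $\sum_y \mc{E}(y|x) = 1$ for every $x$ turns the objective into $\min_x \sum_y \big(\mc{E}(y|x) - \max\{\mc{E}(y|x) - z_y, 0\}\big) - \frac1M \sum_y z_y$, and the pointwise identity $a - \max\{a-b,0\} = \min\{a,b\}$ collapses this to $\min_x \sum_y \big(\min\{z_y, \mc{E}(y|x)\} - z_y/M\big)$; maximising over $z$ then yields the stated formula.

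A last detail to dispatch is that the multipliers $z_y$ were required to be non-negative whereas the Proposition writes an unconstrained $\max_z$. This changes nothing: if some $z_y < 0$, then because $\mc{E}(y|x) \ge 0$ the corresponding summand equals $z_y(M-1)/M \le 0$ for every $x$, so replacing that $z_y$ by $0$ weakly increases each inner sum and hence the outer minimum; the unconstrained maximum is therefore attained on $z \ge 0$ and equals the constrained one. There is no real obstacle in any of this — it is all bookkeeping — and the only points demanding care are matching sign conventions when passing between the max-form of the primal LP and the dual LP as written, and the (immediate) check that the primal is feasible and bounded so that strong duality can be invoked.
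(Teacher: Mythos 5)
Your proposal is correct and follows essentially the same route as the paper: dualising the primal LP of Proposition~\ref{errorPrimal} with multipliers $D_{xy}$, $z_y$, $\alpha$, eliminating $D$ and $\alpha$ at fixed $z$ via $D_{xy}=\max\{\mc{E}(y|x)-z_y,0\}$, and using $\sum_y \mc{E}(y|x)=1$ to reach the stated expression. Your two added checks --- primal feasibility and boundedness to license strong duality, and the observation that components $z_y<0$ can be replaced by $0$ so the unconstrained $\max_z$ agrees with the maximum over $z\geq 0$ --- are correct and tidy up details the paper leaves implicit.
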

Allowing $M$ to take on real values in Proposition \ref{errorPrimal}
and defining $\mu := 1/M$, it is evident that $\pe{\NS}{M}{\mc{E}}$ is a
piece-wise linear, non-increasing, concave function of $\mu$ for
$\mu \in [0,1]$. What's more, this can be inverted to obtain a
linear program which gives the smallest value of $1/M$ such that there
exists an NS code of size $\lfloor M \rfloor$ with
error probability $\epsilon$ for $\mc{E}$. That is,
$
	\mM{\NS}{\epsilon}{\mc{E}} = \lfloor \mM{\b}{\epsilon}{\mc{E}} \rfloor
$
where
\begin{align}
	\mM{\b}{\epsilon}{\mc{E}}^{-1} =& \min \mu,\\
	&\text{subject to}\\
	\forall y \in \outA:&\sum_{x \in \inA} R_{xy} \leq \mu, \label{c2}\\
	&\sum_{x\in \inA}\sum_{y \in \outA} \mc{E}(y|x) R_{xy} 
	\geq 1 - \epsilon,\\
	&\text{and the constraints (\ref{Ru}-\ref{pos})}.
\end{align}
At this point, it is quite straightforward to show the claimed equivalence
to the PPV converse:
\begin{proposition}
	\begin{equation}
		\mM{\b}{\epsilon}{\mc{E}} = \mM{\ppv}{\epsilon}{\mc{E}}
	\end{equation}
\end{proposition}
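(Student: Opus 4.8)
The plan is to recognise the linear program that defines $\mM{\b}{\epsilon}{\mc{E}}^{-1}$ as, after a change of variables, precisely
\begin{equation*}
\min_{P_X\in\PDs(\inA)}\ \max_{Q_Y\in\PDs(\outA)}\ \beta_{1-\epsilon}(P_{XY},P_X\times Q_Y),
\end{equation*}
and then to observe that, since inversion reverses order, this quantity is exactly $\mM{\ppv}{\epsilon}{\mc{E}}^{-1}$. The change of variables is $p(x)=P_X(x)$, $R_{xy}=P_X(x)T_{xy}$: under it, the constraints (\ref{Ru})--(\ref{pos}) together with $\sum_{x,y}\mc{E}(y|x)R_{xy}\ge 1-\epsilon$ say exactly that $T$ is a statistical test with $0\le T_{xy}\le 1$ and type I error at most $\epsilon$ for distinguishing $P_{XY}$ from $P_X\times Q_Y$, while $\sum_x R_{xy}=\sum_x P_X(x)T_{xy}$, so that $\sum_y Q_Y(y)\sum_x R_{xy}$ is the type II error of $T$ under $P_X\times Q_Y$. (When $p(x)=0$ the constraints force $R_{xy}=0$ and one may set $T_{xy}=0$.)

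First I would establish $\min_{P_X}\max_{Q_Y}\beta_{1-\epsilon}(P_{XY},P_X\times Q_Y)\le\mM{\b}{\epsilon}{\mc{E}}^{-1}$, which is essentially the PPV converse re-derived at the level of these programs and yields the already-known bound $\mM{\b}{\epsilon}{\mc{E}}\le\mM{\ppv}{\epsilon}{\mc{E}}$. Given any feasible $(R,p,\mu)$ for the $\mM{\b}{\epsilon}{\mc{E}}^{-1}$ program, put $P_X=p$ and $T_{xy}=R_{xy}/p(x)$; then $T$ is an admissible test, and for every $Q_Y$ its type II error $\sum_y Q_Y(y)\sum_x R_{xy}$ is at most $\mu$ by (\ref{c2}). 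Hence $\beta_{1-\epsilon}(P_{XY},P_X\times Q_Y)\le\mu$ for all $Q_Y$, so $\min_{P_X}\max_{Q_Y}\beta_{1-\epsilon}(P_{XY},P_X\times Q_Y)\le\mu$; minimising over feasible points gives this direction.

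The reverse inequality $\mM{\b}{\epsilon}{\mc{E}}^{-1}\le\min_{P_X}\max_{Q_Y}\beta_{1-\epsilon}(P_{XY},P_X\times Q_Y)$ — equivalent to the new assertion $\mM{\b}{\epsilon}{\mc{E}}\ge\mM{\ppv}{\epsilon}{\mc{E}}$ — is where the content lies. Fixing $P_X$, I would write
\begin{equation*}
\max_{Q_Y}\beta_{1-\epsilon}(P_{XY},P_X\times Q_Y)=\max_{Q_Y}\ \min_{T}\ \sum_{x,y}P_X(x)Q_Y(y)T_{xy},
\end{equation*}
a maximin of a bilinear form over the simplex $\PDs(\outA)$ and the nonempty compact convex polytope of admissible tests, and then exchange the two optimisations; afterwards the inner maximum of a linear function over a simplex is attained at a vertex, giving $\min_{T}\max_{y\in\outA}\sum_{x\in\inA}P_X(x)T_{xy}$. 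Choosing a minimising test $T^{\star}$ and setting $p=P_X$, $R_{xy}=P_X(x)T^{\star}_{xy}$, $\mu=\max_y\sum_x R_{xy}$ produces a point feasible for the $\mM{\b}{\epsilon}{\mc{E}}^{-1}$ program with objective value $\max_{Q_Y}\beta_{1-\epsilon}(P_{XY},P_X\times Q_Y)$; minimising over $P_X$ gives the inequality, and taking reciprocals of the resulting equality of programs gives $\mM{\b}{\epsilon}{\mc{E}}=\mM{\ppv}{\epsilon}{\mc{E}}$.

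The step I expect to be the main obstacle is the exchange of $\max_{Q_Y}$ and $\min_T$. I would justify it by Sion's minimax theorem, after checking that the admissible tests form a nonempty, compact, convex set (e.g., $T\equiv 1$ is admissible whenever $\epsilon\ge 0$) and that the objective is bilinear; equivalently, one can dualise the inner linear program that defines $\beta_{1-\epsilon}(P_{XY},P_X\times Q_Y)$ for each fixed $Q_Y$ by strong LP duality and then maximise the resulting dual objective over $Q_Y$, arriving at the same reformulation — a route that fits the LP viewpoint already used for Proposition \ref{errorDualSimp}. The only other point needing care is the bookkeeping of inputs $x$ with $p(x)=0$ in the substitution $T_{xy}=R_{xy}/p(x)$.
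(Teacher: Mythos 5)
Your proposal is correct and follows essentially the same route as the paper: writing $\max_{Q_Y}\beta_{1-\epsilon}$ as a bilinear maximin, exchanging the optimisations by a minimax theorem (the paper invokes von Neumann's, you invoke Sion's or LP duality), and identifying the result with the $\mM{\b}{\epsilon}{\mc{E}}^{-1}$ program via the substitution $R_{xy}=p(x)T_{xy}$. The only presentational difference is that you split the argument into two inequalities with explicit feasible-point constructions, whereas the paper states it as a single chain of program equivalences.
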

\begin{proof}
Writing out the optimisation that determines the PPV converse
(Theorem \ref{PPV}) explicitly (with the shorthands
$p(x) := P_X(x)$, $q(x) := Q_Y(y)$),
it is clear that the function being optimised
is bilinear in $T$ and $q$, both of which are constrained to finite
dimensional polytopes. Using von Neumann's minimax theorem \cite{minimax},
\begin{align}
	\mM{\ppv}{\epsilon}{\mc{E}}^{-1}
	=& \min_p \max_q \min_T
	\sum_{x \in \inA}\sum_{y\in \outA}T_{xy} p(x) q(y)\\
	=& \min_p \min_T \max_q
	\sum_{x \in \inA}\sum_{y\in \outA}T_{xy} p(x) q(y)\\
	=& \min_p \min_T \max_{y\in \outA} \sum_{x \in \inA}T_{xy} p(x)\\
	&\text{subject to}\\
	& \sum_{x \in \inA}\sum_{y\in \outA} \mc{E}(y|x) p_{x} T_{xy}
	\geq 1 - \epsilon,\label{PPVc1}\\
	&\sum_x p(x) = 1, \sum_y q(y) = 1,\label{PPVc2}\\
	\forall x \in \inA, y \in \outA:& 0 \leq T_{xy} \leq 1,
	\label{PPVc3}\\
	\forall x, y:&p(x) \geq 0, q(y) \geq 0.\label{PPVc4}
\end{align}
Writing $R_{xy} = p(x) T_{xy}$, this linear program is equivalent to
\begin{align}
	&\min \mu\\
	&\text{subject to}\\
	\forall y \in \outA:&\sum_{x \in \inA} R_{xy} \leq \mu,\label{cA1}\\
	& \sum_{x \in \inA}\sum_{y\in \outA} \mc{E}(y|x) R_{xy}
	\geq 1 - \epsilon,\\
	& \sum_{x\in \inA} p(x) = 1,\label{norm1}\\
	\forall x\in \inA,y\in \outA:& 0 \leq R_{xy} \leq p(x)\label{cA2},
\end{align}
which is exactly the primal LP for $\mM{\b}{\epsilon}{\mc{E}}^{-1}$.
\end{proof}

Since the maximisation of $\frac{1}{\mu}$ under
the constraints (\ref{cA1}-\ref{cA2}), which
yields
$M^{\b}_P(\mc{E})$
directly, is a linear-fractional program
\cite{BoydVandenberghe}, the Charnes-Cooper transformation
\cite{CharnesCooper}
\begin{align}
	F_{xy} := R_{xy}/\mu,~
	v_x := p(x)/\mu,~
	t := 1/\mu,
\end{align}
can be used to transform it into a linear program for
$\mM{\b}{\epsilon}{\mc{E}}$, from which $t$ can be eliminated
by using the transformed version of (\ref{norm1}),
$\sum_{x\in \inA}v_x = t$, to obtain 
\begin{theorem}
	\label{messagesPrimal}  
	$\mM{\NS}{\epsilon}{\mc{E}} = \lfloor \mM{\b}{\epsilon}{\mc{E}} \rfloor$,
	where
\begin{align}
	\mM{\b}{\epsilon}{\mc{E}} &= \max \sum_{x\in \inA} v_x,\\
	&\text{{\rm subject to}}\\
	\forall x \in \inA, y \in \outA :&F_{xy} \leq v_x \label{MpXY},\\
	\forall y \in \outA:&\sum_{x\in \inA} F_{xy} \leq 1 \label{MpY},\\
	&\sum_{x \in \inA}\sum_{y\in \outA} \mc{E}(y|x) F_{xy}
	\geq (1-\epsilon)\sum_{x\in \inA} v_x \label{MpP},\\
	\forall x \in \inA, y \in \outA :&F_{xy} \geq 0, v_{x} \geq 0.
\end{align}
\end{theorem}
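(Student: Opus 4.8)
The plan is to obtain the stated LP simply by applying the Charnes--Cooper transformation \cite{CharnesCooper} to the linear-fractional program that yields $\mM{\b}{\epsilon}{\mc{E}}$ directly --- the maximisation of $1/\mu$ subject to (\ref{cA1})--(\ref{cA2}) --- and then eliminating the auxiliary variable $t$. The first assertion, $\mM{\NS}{\epsilon}{\mc{E}} = \lfloor \mM{\b}{\epsilon}{\mc{E}} \rfloor$, has already been established, so only the explicit LP needs to be derived.

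First I would check that the transformation is legitimate by noting that $\mu > 0$ at every feasible point: if $\mu = 0$ then $R_{xy} \geq 0$ and $\sum_x R_{xy} \leq \mu$ force $R \equiv 0$, contradicting $\sum_{x,y}\mc{E}(y|x)R_{xy} \geq 1-\epsilon$ for $\epsilon < 1$ (the case $\epsilon = 1$ being trivial). Hence $\mu$ has constant sign on the feasible region and $F_{xy} := R_{xy}/\mu$, $v_x := p(x)/\mu$, $t := 1/\mu$ is a bijection, preserving objective values, between feasible points of the LFP and feasible points of the transformed program with $t > 0$. Pushing the constraints through is then routine: dividing (\ref{cA1}) by $\mu$ gives (\ref{MpY}); dividing the error constraint by $\mu$ gives $\sum_{x,y}\mc{E}(y|x)F_{xy} \geq (1-\epsilon)t$; dividing (\ref{norm1}) by $\mu$ gives $\sum_x v_x = t$; and dividing $0 \leq R_{xy} \leq p(x)$ by $\mu > 0$ gives (\ref{MpXY}) together with $F_{xy}, v_x \geq 0$. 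Using $t = \sum_x v_x$ to eliminate $t$ turns the error constraint into (\ref{MpP}) and the objective $1/\mu = t$ into $\sum_x v_x$, which is exactly Theorem \ref{messagesPrimal}.

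It remains to confirm that the two optima coincide. One direction is immediate from the objective-preserving bijection above, so $\mM{\b}{\epsilon}{\mc{E}} \leq \max\sum_x v_x$. For the reverse, any feasible point of the Theorem \ref{messagesPrimal} LP with $\sum_x v_x > 0$ can be rescaled by $t := \sum_x v_x$ back into a feasible point of the LFP of the same value; and for $\epsilon < 1$ there is a feasible LP point with positive objective (take a single input $x_0$ with $v_{x_0} = 1$ and $F_{x_0 y} = 1$ for all $y$, corresponding to $M = 1$), so the maximiser indeed has $\sum_x v_x > 0$. I expect the only genuine care-point to be this correspondence of feasible sets --- in particular the degenerate boundary $\mu = 0$ / $t = 0$ --- while the rest is mechanical bookkeeping of the constraint substitutions.
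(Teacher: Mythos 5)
Your proposal is correct and follows essentially the same route as the paper: the paper likewise obtains Theorem \ref{messagesPrimal} by applying the Charnes--Cooper transformation $F_{xy}=R_{xy}/\mu$, $v_x=p(x)/\mu$, $t=1/\mu$ to the linear-fractional program over (\ref{cA1})--(\ref{cA2}) and eliminating $t$ via $\sum_x v_x = t$, taking the earlier identity $\mM{\NS}{\epsilon}{\mc{E}} = \lfloor \mM{\b}{\epsilon}{\mc{E}} \rfloor$ as given. Your additional checks (positivity of $\mu$, the two-way correspondence of feasible points, and the degenerate $\epsilon=1$ case) are sound details that the paper leaves implicit.
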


Since the main goal is to obtain \emph{upper} bounds on $M$,
the dual of this linear program is more useful.
Introducing Lagrange multipliers
$V_{xy}$, $c_y$ and $\xi$ for the constraints (\ref{MpXY}), (\ref{MpY}) and
(\ref{MpP}) respectively,
taking the infimum of the resulting Lagrangian over non-negative $F$ and $v$,
and restricting the multipliers to the finite region gives us
the dual program:
\begin{theorem}
	\label{messagesDual}
	$\mM{\NS}{\epsilon}{\mc{E}} = \lfloor \mM{\b}{\epsilon}{\mc{E}} \rfloor$,
	where
\begin{align}
	\mM{\b}{\epsilon}{\mc{E}} &= \min \sum_{y \in \outA} c_y,\\
	&\text{{\rm subject to}}\\
	\forall x \in \inA, y \in \outA :
	&V_{xy} + c_y\geq \zeta \mc{E}(y|x) \label{MdXY},\\
	\forall x \in \inA:&\sum_{y \in \outA} V_{xy}
	\leq (1-\epsilon)\zeta - 1 \label{MdX},\\
	\forall x \in \inA, y \in \outA :&V_{xy} \geq 0, c_{y} \geq 0.
\end{align}
\end{theorem}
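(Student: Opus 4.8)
The plan is to derive Theorem~\ref{messagesDual} as the Lagrangian dual of the primal linear program of Theorem~\ref{messagesPrimal} and then invoke the strong duality theorem for linear programming to conclude that the two optimal values coincide, so that the identification of the primal optimum with $\mM{\NS}{\epsilon}{\mc{E}}$ carries over to the dual verbatim. Since the text preceding the theorem already announces this route (``introducing Lagrange multipliers \dots taking the infimum of the resulting Lagrangian \dots''), the substance is to carry it out and to check that nothing is lost in passing to the dual.

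First I would attach multipliers $V_{xy}\ge 0$, $c_y\ge 0$ and $\zeta\ge 0$ to the constraints (\ref{MpXY}), (\ref{MpY}) and (\ref{MpP}) respectively — the signs being forced by the fact that (\ref{MpXY}) and (\ref{MpY}) are ``$\le$'' constraints while (\ref{MpP}) is a ``$\ge$'' constraint of a maximisation — leaving the nonnegativity of $F$ and $v$ unrelaxed. Collecting the coefficient of each $v_x$ and of each $F_{xy}$, the Lagrangian becomes
\begin{align*}
	L &= \sum_{x\in\inA} v_x\Big(1 + \sum_{y\in\outA} V_{xy} - (1-\epsilon)\zeta\Big)\\
	&\quad {}+ \sum_{x\in\inA}\sum_{y\in\outA} F_{xy}\big(\zeta\,\mc{E}(y|x) - V_{xy} - c_y\big) + \sum_{y\in\outA} c_y .
\end{align*}
Taking the supremum over $F\ge 0$ and $v\ge 0$ yields $+\infty$ unless every coefficient of a $v_x$ or an $F_{xy}$ is $\le 0$; those nonpositivity conditions are precisely (\ref{MdX}) and (\ref{MdXY}), and on that region the supremum equals the leftover term $\sum_{y} c_y$. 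Minimising over the remaining variables gives exactly the stated program. I would also note that an explicit constraint $\zeta\ge 0$ is redundant: (\ref{MdX}) forces $(1-\epsilon)\zeta \ge 1 + \sum_y V_{xy}\ge 1$, so $\zeta$ is automatically positive when $\epsilon<1$, which is why no sign constraint on $\zeta$ is listed.

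The one point that genuinely needs argument is that there is no duality gap, so that ``$=$'' holds rather than merely ``$\le$''. For this I would observe that the primal of Theorem~\ref{messagesPrimal} is always feasible ($F\equiv 0$, $v\equiv 0$ satisfies every constraint) and is bounded above: summing (\ref{MpY}) over $y$ and using $\mc{E}(y|x)\le 1$ in (\ref{MpP}) gives $(1-\epsilon)\sum_{x} v_x \le \sum_{x,y}\mc{E}(y|x)F_{xy} \le |\outA|$, hence $\sum_x v_x \le |\outA|/(1-\epsilon)$. A feasible and bounded linear program attains its optimum, so by LP strong duality the dual is feasible with the same value, which is the claimed equality; the relation $\mM{\NS}{\epsilon}{\mc{E}} = \lfloor \mM{\b}{\epsilon}{\mc{E}}\rfloor$ then follows immediately from Theorem~\ref{messagesPrimal}. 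The Lagrangian computation itself is routine; the only mild obstacle is keeping the multiplier signs straight and supplying the boundedness estimate that licenses the use of strong duality.
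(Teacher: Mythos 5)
Your proposal is correct and follows essentially the same route as the paper: Lagrangian duality applied to the primal LP of Theorem~\ref{messagesPrimal}, with multipliers $V_{xy}$, $c_y$, $\zeta$ attached to (\ref{MpXY}), (\ref{MpY}), (\ref{MpP}) and strong LP duality closing the gap. Your added checks (primal feasibility and boundedness, and the observation that $\zeta\ge 0$ is implied by (\ref{MdX}) when $\epsilon<1$) merely make explicit details the paper leaves implicit.
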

At any feasible point of this dual LP, the value of the
objective function is an upper bound on $\mM{\NS}{\epsilon}{\mc{E}}$.
\subsection{The zero-error case.}
In \cite{CLMW2}, it was shown that $\mM{\NS}{0}{\mc{E}}$ is given by a
linear program which is determined by a combinatorial object associated
with $\mc{E}$, namely its \emph{hypergraph}. This subsection recovers that
result as a special case of the results developed here.
First, some definitions:
	The \emph{hypergraph} $\hg(\mc{E})$ of $\mc{E}$ has vertex set
	$V(H) = \inA$ and hyperedges
	\begin{equation}
		E(H(\mc{E}))
		:= \{ e_{y} := \{ x : \mc{E}(y|x) > 0 \} : \forall y \in Y \}
	\end{equation}
	capturing the equivocation of each output symbol $y \in \outA$.
	(Note that since the set of hyperedges is defined by its members,
	these being subsets of $\inA$, the number of hyperedges may be less
	than the number of output symbols.)
	\label{defi:fractional}
	A \emph{fractional packing} of a hypergraph $H$
	is an assignment of non-negative weights
	$v(x) \leq 1$ to all vertices $x \in V(H)$ such that
  \begin{equation}
    \forall e \in E(H): \quad \sum_{x \in e} v_x \leq 1.
  \end{equation}
	A \emph{fractional covering} of a hypergraph $H$
	is an assignment of non-negative weights
	$c(e) \leq 1$ to all hyperedges
	$e \in E(H)$ such that
	\begin{equation}
	\forall\ x\in \inA: \quad \sum_{e \ni x} c_e \geq 1.
  	\end{equation}
  (Restricting the weights to $\{0,1\}$ recovers
the combinatorial notions of packing and covering.)

The \emph{fractional packing number} $\alpha^*(H)$ is the maximum total
weight allowed in a fractional packing of $H$ and the
\emph{fractional covering number} $\omega^*(H)$ is the minimum total
weight required for a fractional covering of $H$. These are clearly
dual linear programs, which for a channel hypergraph $H(\mc{E})$
have the formulation
\begin{align*}
	\alpha^*(H(\mc{E})) =
	\max \bigg\{ & \sum_{x \in \inA} v_x :
	\forall x\in \inA, v(x) \geq 0,\\
    & \sum_{y \in \outA} \lceil \mc{E}(y|x) \rceil v_x \leq 1 \bigg\},\\
    \omega^*(H(\mc{E})) =
	\min \bigg\{ & \sum_{y \in \outA} c_y :
	\forall y\in \outA, c_y \geq 0,\\
    & \sum_{x \in \inA} \lceil \mc{E}(y|x) \rceil c_y \geq 1 \bigg\},
\end{align*}
(note that $\lceil \mc{E}(y|x) \rceil$ is $0$ if $\mc{E}(y|x) = 0$
and is otherwise $1$.)

In \cite{CLMW2} it was shown that
$\mM{\NS}{0}{\mc{E}} = \lfloor \alpha^{\ast}(H(\mc{E}))\rfloor$.
Given Theorem \ref{messagesPrimal}, this is equivalent to
\begin{proposition}\label{zero-error}
	\begin{equation}
		\mM{\b}{0}{\mc{E}} = \omega^{\ast}(H(\mc{E}))
		= \alpha^{\ast}(H(\mc{E})).
	\end{equation}
\end{proposition}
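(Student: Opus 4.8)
The plan is to specialise the primal and dual linear programs of Theorems~\ref{messagesPrimal} and~\ref{messagesDual} to $\epsilon = 0$ and recognise them as the fractional packing and fractional covering LPs of $H(\mc{E})$. I will first establish $\mM{\b}{0}{\mc{E}} = \alpha^{\ast}(H(\mc{E}))$ from the primal. At $\epsilon = 0$, constraint~(\ref{MpP}) reads $\sum_{x,y}\mc{E}(y|x)F_{xy}\ge\sum_x v_x$; but $F_{xy}\le v_x$ together with $\sum_y \mc{E}(y|x) = 1$ gives $\sum_{x,y}\mc{E}(y|x)F_{xy}\le\sum_{x,y}\mc{E}(y|x)v_x = \sum_x v_x$, so the constraint in fact forces $\sum_{x,y}\mc{E}(y|x)(v_x - F_{xy}) = 0$, and since every summand is non-negative this is equivalent to $F_{xy} = v_x$ for every $(x,y)$ with $\mc{E}(y|x)>0$, i.e.\ for every $x\in e_y$. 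The remaining variables $F_{xy}$ with $x\notin e_y$ occur only in~(\ref{MpXY}), (\ref{MpY}) and the non-negativity constraints, so setting them all to $0$ is without loss of optimality (it preserves feasibility, fixes the objective, and only relaxes~(\ref{MpY})). After this, (\ref{MpY}) becomes exactly $\sum_{x\in e_y}v_x\le 1$ for each $y$; the objective is $\sum_x v_x$; and, since $\sum_y\mc{E}(y|x)=1$ forces every vertex into some hyperedge, the bound $v_x\le 1$ is automatic. Thus the feasible $v$ are precisely the fractional packings of $H(\mc{E})$, and $\mM{\b}{0}{\mc{E}} = \alpha^{\ast}(H(\mc{E}))$.

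For the remaining equality I would argue $\mM{\b}{0}{\mc{E}} = \omega^{\ast}(H(\mc{E}))$ directly from the dual LP of Theorem~\ref{messagesDual}, which at $\epsilon=0$ is $\min\sum_y c_y$ subject to $V_{xy}+c_y\ge\zeta\mc{E}(y|x)$, $\sum_y V_{xy}\le\zeta-1$, and $V_{xy},c_y\ge0$. Summing the first constraint over $\{y : x\in e_y\}$ and using $\sum_y V_{xy}\le\zeta-1$ and $\sum_{y:x\in e_y}\mc{E}(y|x)=1$ yields $\sum_{y:x\in e_y}c_y\ge1$, so the $c$-part of any feasible point is a fractional covering; conversely, given a fractional covering $c$ (with $c_y\le1$), choosing $\zeta$ larger than $\max_{x,y:\mc{E}(y|x)>0}c_y/\mc{E}(y|x)$ and $V_{xy}=\max\{\zeta\mc{E}(y|x)-c_y,0\}$ produces a feasible point of the same objective value. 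Hence $\mM{\b}{0}{\mc{E}}=\omega^{\ast}(H(\mc{E}))$, and together with the first part all three quantities coincide. Equivalently, $\alpha^{\ast}(H(\mc{E}))=\omega^{\ast}(H(\mc{E}))$ is just LP strong duality for the (feasible, bounded) packing/covering pair already displayed, the side conditions $v_x\le1$, $c_e\le1$ being inessential.

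I expect the only slightly delicate step to be the primal reduction at $\epsilon=0$ --- showing that~(\ref{MpP}) pins $F_{xy}=v_x$ on the support of $\mc{E}$ and that the off-support entries can be zeroed --- after which the identification with fractional packing is immediate. Everything else is routine: the choice of a large enough $\zeta$ in the dual direction, the harmlessness of the $v_x\le1$ and $c_e\le1$ side constraints, and the fact that distinct output symbols may define the same hyperedge, so that the $\omega^{\ast}$-LP written over $y\in\outA$ has the same optimal value as the hypergraph fractional covering LP.
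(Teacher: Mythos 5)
Your proposal is correct, and it shares the paper's two key constructions: mapping a fractional packing $v$ to a primal-feasible point by setting $F_{xy}=v_x$ on the support of $\mc{E}$ (giving $\alpha^{\ast}(H(\mc{E}))\leq \mM{\b}{0}{\mc{E}}$), and mapping a fractional covering $c$ to a dual-feasible point via a large enough $\zeta$ and $V_{xy}=\max\{\zeta\mc{E}(y|x)-c_y,0\}$ (giving $\mM{\b}{0}{\mc{E}}\leq\omega^{\ast}(H(\mc{E}))$). Where you differ is in the overall structure: the paper proves only these two one-sided bounds and then closes the sandwich by invoking the known equality $\alpha^{\ast}(H)=\omega^{\ast}(H)$ (strong duality of the packing/covering pair), whereas you prove each equality outright by showing the $\epsilon=0$ LPs of Theorems~\ref{messagesPrimal} and~\ref{messagesDual} \emph{reduce} to the packing and covering LPs: in the primal, constraint~(\ref{MpP}) combined with $F_{xy}\leq v_x$ and $\sum_y\mc{E}(y|x)=1$ forces $F_{xy}=v_x$ on the support (and off-support entries can be zeroed), so feasible $v$'s are exactly fractional packings; in the dual, summing~(\ref{MdXY}) over $y$ with $x\in e_y$ and using~(\ref{MdX}) shows every feasible $c$ is a fractional covering. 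Both of these extra reverse directions check out (including your observations that $v_x\leq 1$ is automatic, that the $c_e\leq 1$ side constraint is harmless for a minimisation, and that repeated hyperedges do not change the optimal values). What each route buys: the paper's version is shorter because it delegates the closing step to standard fractional packing/covering duality, which it has already noted; your version leans instead on the already-established strong duality between Theorems~\ref{messagesPrimal} and~\ref{messagesDual}, yields $\alpha^{\ast}(H(\mc{E}))=\omega^{\ast}(H(\mc{E}))$ as a by-product rather than an input, and additionally characterises the optimal primal solutions at $\epsilon=0$ (they must satisfy $F_{xy}=v_x$ wherever $\mc{E}(y|x)>0$), at the cost of the extra verification steps you carry out.
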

\begin{proof}
	In the primal LP for $\mM{\b}{0}{\mc{E}}$ (Theorem \ref{messagesPrimal}),
	let $v_x$ be any fractional packing of $H(\mc{E})$, and let
	\begin{equation}
		F_{xy} =
		\begin{cases}
			v_x &\text{ if } \mc{E}(y|x) > 0,\\
			0 &\text{ otherwise. }
		\end{cases}
	\end{equation}
	Now, the constraints (\ref{MpXY}) are trivially satisfied and the
	constraint (\ref{MpP}) is satisfied because
	$\sum_{x \in \inA}\sum_{y\in \outA} \mc{E}(y|x) F_{xy}
	= \sum_{x \in \inA}\sum_{y\in \outA} \mc{E}(y|x) v_x
	= \sum_{x\in \inA} v_x$.
	For all $y \in \outA$,
	$\sum_{x\in \inA} F_{xy} = \sum_{x: \mc{E}(y|x) > 0} v_x$
	which is less than or equal to one because $v_x$ is a fractional packing,
	so the constraints (\ref{MpY}) are satisfied.
	Therefore, \begin{align}\label{fpacking}
		\alpha^{\ast}(H(\mc{E})) \leq \mM{\b}{0}{\mc{E}}.
	\end{align}

	In the dual LP for $\mM{\b}{0}{\mc{E}}$ (Theorem \ref{messagesDual}),
	let $c_y$ be any fractional covering of $H(\mc{E})$,
	choose the smallest $\zeta$ such that
	$\forall x,y: \zeta \mc{E}(y|x) \geq c_y$, and let
	$V_{xy} = \max\{ 0, \zeta \mc{E}(y|x) - c_y \}$.
	Clearly the constraints (\ref{MdXY}) are satisfied, and
	for all $x \in \inA$,
	\begin{align}
		\sum_{y \in \outA} V_{xy}
		= \sum_{ y : \mc{E}(y|x) > 0 } (\zeta \mc{E}(y|x) - c_y) 
		\leq \zeta - 1,
	\end{align} as required for (\ref{MdX}). Therefore,
	\begin{align}\label{fcovering}
		\mM{\b}{0}{\mc{E}} \leq \omega^{\ast}(H(\mc{E})).
	\end{align}
	Since $\omega^{\ast}(H(\mc{E})) = \alpha^{\ast}(H(\mc{E}))$, the result
	follows.
\end{proof}
\subsection{Taking advantage of symmetry}
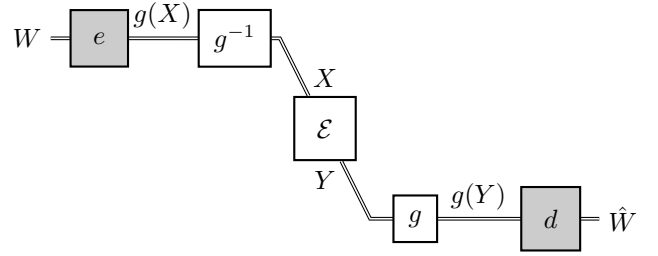
\begin{figure}[ht]
	\begin{tikzpicture}[scale=1.2]
		\node (input) at (-3.3,1) {$\mRV$};
		\node (aperm) at (-1,1) {};
		\node (aplab) at (-1.8,1.25) {$g(X)$};
		\node (abox) at (-2.5,1) {};
		\node[rectangle, draw, thick, fill=white, inner sep=3mm]
		(chan) at (0,0) {$\mc{E}$};
		
		\node (aplab) at (1.7,-.75) {$g(Y)$};
		\node (bperm) at (1,-1) {};
		\node (bbox) at (2.5,-1) {};
		\node (output) at (3.3,-1) {$\dmRV$};
		
		\draw[draw=black, style=double]
		(input) -- (abox) -- (-.5,1) -- (chan);
		\draw[draw=black, style=double]
		(chan) -- (.5,-1) -- (bbox) -- (output);
		\node[rectangle, draw, thick, fill=black!20, inner sep=3mm]
		(abox2) at (abox) {$e$
		};
		\node[rectangle, draw, thick, fill=white, inner sep=2mm]
		(aperm2) at (aperm) {$g^{-1}$
		};
		\node[rectangle, draw, thick, fill=black!20, inner sep=3mm]
		(bbox2) at (bbox) {$d$
		};
		\node[rectangle, draw, thick, fill=white, inner sep=2mm]
		(bperm2) at (bperm) {$g$
		};
		\node[above] at (chan.north) {$X$};
		\node[below] at (chan.south) {$Y$};
	\end{tikzpicture}
  \centering
  \caption{Operational interpretation of the code $\bar{\code}$
which results from the symmetrisation (\ref{sym2}) of a non-signalling code.
The boxes marked `$e$' and `$d$' are the encoder and decoder for the original
non-signalling code $\code$.
The transformations of the channel input and output
are coordinated by a shared random variable $g$
drawn uniformly at random from the group $G$.}
\label{fig:sym2}
\end{figure}

\def\ga{\circ}
Let $G$ be a group with an action on the input alphabet $\inA$
and on the output alphabet $\outA$ (inducing a joint
action on $\inA \times \outA$), such that
\begin{equation}
	\forall g \in G : \mc{E}(g \ga y | g \ga x) = \mc{E}(y|x).
\end{equation}
For any non-signalling code $\code$ define the code
\begin{equation}\label{sym2}
\bar{\code}(x,\dmV|\mV, y)
:= \frac{1}{|G|}\sum_{g\in G} \code(g\cdot x,\dmV|\mV,g\cdot y),
\end{equation}
whose operational interpretation is given in Fig \ref{fig:sym2},
and which is also non-signalling. The value of $\bar{\code}(x,\dmV|\mV, y)$
depends only on $G(x,y)$, that is, the orbit of $(x,y)$ under the joint
action of $G$, and since
\begin{align}
	&\Pr(\dmRV\is \dmV|\mRV\is \mV,\bar{\code},\chan,\src_M)\\
	= & \frac{1}{|G|}\sum_{g\in G}
	\sum_{x \in \inA}\sum_{y\in \outA}
	\code(x,\dmV|\mV,y) \mc{E}(g \ga y | g \ga x) \\
	= & \frac{1}{|G|}\sum_{g\in G}
	\sum_{x \in \inA}\sum_{y\in \outA}
	\code(g^{-1}\ga x,\dmV|\mV,g^{-1}\ga y) \mc{E}(y | x)\\
	= & \Pr(\dmRV\is \dmV|\mRV\is \mV,\code,\chan,\src_M),
\end{align}
the optimisations for $\pe{\NS}{M}{\mc{E}}$ and $\mM{\NS}{\epsilon}{\mc{E}}$
in Definition \ref{code-params} can be restricted
to codes with this symmetry.

Proposition (\ref{errorPrimal}) was obtained by showing that one can
already to NS codes of the form
\begin{equation}
	\code(x,\dmV|\mV,y) = \begin{cases}
							R_{xy} &\text{ if } \dmV = \mV,\\
							\frac{u_{x} - R_{xy}}{M-1} &\text{ if }
							\dmV \neq \mV,
						\end{cases}
\end{equation}
without increasing the optimal error probability.
Applying the symmetrisation (\ref{sym2})
to this expression, $R_{xy}$ and $p(x)$ will only depend on
$G(x,y)$ and $Gx$, respectively. 

An example where symmetry can be used to great effect is where
$\mc{E}^n$ (with input alphabet is $\inA^n$ and output alphabet $\outA^n$)
is invariant under the actions of the symmetric group $S^n$
that permutes the symbols in the
input and output strings. This is true for any DMC, for example.

Following \cite{CK,MoT}, the \emph{joint type} of a pair of sequences
$\mathbf{x} = x_1 \ldots x_n \in \inA^n$ and
$\mathbf{y} = y_1 \ldots y_n \in \outA$,
is the distribution $P_{\mathbf{x},\mathbf{y}}$ on $\inA \times \outA$
defined by
$
	n P_{\mathbf{x},\mathbf{y}}(a,b) = N(a,b|\mathbf{x},\mathbf{y})
$
where $N(a,b|\mathbf{x},\mathbf{y})$ is the number of values of $i$ for
which $(x_i,y_i) = (a,b)$. $\mathcal{P}_n(\inA \times \outA)$ denotes
the set of all such joint types.
Likewise, the \emph{type} of a sequence
$\mathbf{x} \in \inA^n$ is the distribution $P_{\mathbf{x}}$ on $\inA$ with
$N(a|\mathbf{x}) = n P_{\mathbf{x}}(a)$ and $\mathcal{P}_n(\inA)$ is the
set of these.
Given a joint type
$\tau_{\inA\outA}$, the \emph{joint type class}
$\tc^n_{\tau_{\inA\outA}}$ is the set of all pairs of strings
$(\mathbf{x},\mathbf{y})$ with joint type $\tau_{\inA\outA}$.
Similarly, for a type $\tau_{\inA}$,
$\tc^n_{\tau_{\inA}} :=
\{ \mathbf{x} \in \inA^n : P_{\mathbf{x}} = \tau_{\inA} \}
$.

As is well known, the orbits of
$\inA^n \times \outA^n$ under the joint action of the symmetric
group described above,
are precisely the joint type classes $\tc^n_{\tau_{\inA\outA}}$,
for each joint type in
$\jointtypes$, and $\mc{E}^n(\mb{y}|\mb{x})$ is a function only of
the joint type of $\mb{x}$ and $\mb{y}$:
\begin{equation}
	\mc{E}^n(\mb{y}|\mb{x}) = \mc{E}^n(P_{\mb{x},\mb{y}}).
\end{equation}
For a DMC, and joint type $\tau_{\inA\outA} \in \jointtypes$,
\begin{equation}
	\mc{E}^n(P_{\mb{x},\mb{y}})
	= \prod_{a \in \inA,b \in \outA} \chan(b|a)^{n \tau_{\inA\outA}(a,b)}.
\end{equation}

Therefore, in the primal formulation of $\pe{\NS}{M}{\mc{E}^n}$
(Proposition \ref{errorPrimal})
one can take $R_{xy} = R(P_{\mb{x},\mb{y}})$
and $p(x) = p(P_{\mb{x}})$ for all $\mb{x},\mb{y}$, and
replace the sums over the input and output strings with
sums over joint types (or types)
which incorporate the correct multiplicity factors.

The objective function in (\ref{pObj}) becomes
\begin{equation}
	\sum_{\jtVar \in \jointtypes}
	|\tc^n_{\jtVar}| R(\jtVar) \mc{E}(\jtVar),
\end{equation}
where
$
	|\tc^n_{\jtVar}|
	= n!/(\prod_{a \in \inA, b \in \outA}
	(n\jtVar(a,b))!).
$
Similarly, the normalisation of $u$ (\ref{RuNorm}) becomes
\begin{equation}
\sum_{\sigma \in \mathcal{P}_n(\inA)} | \tc^n_{\sigma} |  u(\sigma_{\inA}) = 1.
\end{equation}
where $| \tc^n_{\sigma} | = n!/(\prod_{a \in \inA}
(n\sigma(a))!)$.
In (\ref{RMeq}) there is a constraint on a sum over $\inA^n$ for each
output string in $\outA^n$.
The number of pairs $(\mathbf{x},\mathbf{y})$ with joint type equal to
$\jtVar$ for fixed $\mathbf{y}$, depends only on $P_{\mathbf{y}}$, and is
equal to
\begin{equation}
	m(\tau;P_{\mathbf{y}}) := \begin{cases}
		\displaystyle\prod_{b \in \outA}
		\frac{(n\tau_{\outA}(b))! }
		{\prod_{a \in \inA} (n\jtVar(a,b))!}
			&\text { if } \tau_{\outA} = P_{\mathbf{y}},\\
		0 & \text { otherwise}.
	\end{cases}
\end{equation}
(Note that if the joint type of
$(\mathbf{x}, \mathbf{y})$
is $\tau_{\inA\outA}$,
then
the marginal distribution $\tau_{\inA}$ is the type of $\mathbf{x}$,
and $\tau_{\outA}$ is the type of $\mathbf{y}$.)
Therefore, (\ref{RMeq}) can be replaced by
\begin{equation}
	\forall \sigma_{\outA} \in \mathcal{P}_n (\outA) :
	\sum_{\jtVar \in \jointtypes} m(\jtVar;\sigma_{\outA}) R(\jtVar) \leq 1/M.
\end{equation}
The remaining constraints are equivalent to
\begin{align}
	\forall \jtVar \in \jointtypes &:
	0 \leq R(\tau_{\inA\outA}) \leq p(\tau_\inA).
\end{align}

Since the number of (joint) types is polynomial in $n$ \cite{CK,MoT}
the number of variables and constraints in the simplified LP given above
is polynomial in $n$, and this is also true of dual of this program.
The linear programs derived for $\mM{\NS}{\epsilon}{\mc{E}^n}$
can be simplified similarly.

\section{Assisted zero-error capacities of discrete memoryless channels
with zero dispersion}\label{dispersion}
As shown in \cite{CLMW2}, for any DMC, it follows from
Proposition \ref{zero-error} and the
multiplicitivity of the fractional packing number, that
\begin{align}
	C_0^{\NS}(\bm{\mc{E}}) = \log \alpha(H(\mc{E}))^{\ast}.
\end{align}
\def\sml{\mathcal{L}}

A \emph{simulation} $\sml$ of size $\kappa$ for the channel use
$\mc{E}$ of size $\kappa$
consists of an encoder which takes an input $X$ from $\inA$ and produces
a message $J$ in $\{ 1,\ldots,\kappa \}$,
and a decoder which takes a message $\hat{J}$ in $\{ 1,\ldots,\kappa\}$
and produces
an output $Y$ from $\outA$.
The $\sml$ determines the probabilities
\begin{equation}
	\sml(j,y|x,\hat{j}) :=
\Pr(J=j,Y=y|X=x,\hat{J}=\hat{j},\sml).
\end{equation}
We assume that the message is perfectly transmitted from the
encoder to the decoder, so $\hat{J} = J$.
The simulation is \emph{exact} if
\begin{align}
	\Pr(Y = y|X = x, \mathcal{L}) = \sum_{j=1}^{\kappa}\sml(j,y|x,j)
	= \mc{E}(y|x).
\end{align}
A non-signalling (NS) simulation is one where
\begin{align}
	\P{Y\is y|X\is x,\hat{J}\is \hat{j},\sml}
	=& \P{Y\is y|\hat{J}\is \hat{j},\sml},\\
	\P{J\is j|X \is x,\hat{J}\is \hat{j},\sml}
	=& \P{J\is j|X \is x,\sml}.
\end{align}
$\kappa^{\NS}_0(\mc{E})$ denotes
the minimum size of an exact NS simulation
of $\mc{E}$, and
\begin{equation}
	K_0^{\NS}(\bm{\mc{E}})
	:= \lim_{n\to\infty} \frac{1}{n} \log \kappa^{\NS}_0(\mc{E}^n)
\end{equation}
is the (asymptotic) \emph{exact simulation cost} of $\bm{\mc{E}}$.
In \cite{CLMW2} it was shown that, for any DMC,
\begin{equation}
	K_0^{\NS}(\bm{\mc{E}})
	= \log \sum_{y \in \outA} \max_{x \in \inA} \mc{E}(y|x).
\end{equation}

In what follows, $\bm{\mc{E}}$ is omitted as an argument, since it refers to
some fixed channel. For any discrete channel,
\begin{equation}\label{sandwich}
	C_0^{}
	\leq C_0^{\SE}
	\leq C_0^{\NS}
	\leq C
	\leq K_0^{\NS}.
\end{equation}

From now on, let $\bm{\mc{E}}$ be a DMC with $\mc{E}^n = \mc{E}^{\ox n}$.
Proposition 26 of \cite{CLMW2} shows that, given any requirement on which
transition probabilities in $\mc{E}$ must be zero, it is possible to find
an $\mc{E}$ that satisfies that requirement and has all three quantities
in (\ref{sandwich}) equal. In \cite{LMMOR} it was shown that there are
DMCs where even the
\emph{entanglement-assisted} zero-error capacity $C_0^{\SE}$
reaches $C$
(and with a block length one entanglement-assisted code) despite the
unassisted zero-error capacity $C_0^{}$ being strictly smaller.

For a DMC,
$V$ is the minimum variance of the information density of the
channel for the joint distribution induced by any capacity achieving input
distribution for a single channel use. The information density is
\begin{equation}\label{infd}
	i(x;y) = \log \frac{\mc{E}(y|x)}{q(y)},
\end{equation}
where $q \in \mathcal{P}(\outA)$
is the output distribution. The capacity is the expectation of
the information density. Therefore, $V = 0$ iff there exists a
capacity achieving input distribution $p \in \mathcal{P}(\inA)$
(with induced output distribution $q$)
such that (\ref{infd}) is equal to $C$ when $\mc{E}(y|x)p(x)$
is non-zero, i.e. if and only if, for $x$ s.t. $p(x) > 0$
\begin{equation}\label{V0}
	\mc{E}(y|x) = \lceil \mc{E}(y|x) \rceil q(y) 2^{C}.
\end{equation}
If $V$ is zero, then the $\sqrt{n}$ term vanishes in the asymptotic
expansion. In this sense, a channel with zero dispersion admits qualitatively
more efficient codes (in terms of approaching capacity with increasing
block length)
than a channel with positive variance does. It turns out that a channel has
zero dispersion if and only if its capacity can be achieved with
zero-error by NS codes.
\begin{theorem}
	For a DMC $\bm{\mc{E}}$ the three conditions
	\begin{enumerate}
		\item $C^{\rm{NS}}_0(\bm{\mc{E}}) = C(\bm{\mc{E}})$,
		\item $K^{\rm{NS}}_0(\bm{\mc{E}}) = C(\bm{\mc{E}})$,
		\item $V(\bm{\mc{E}}) = 0$.
	\end{enumerate}
	are equivalent.
\end{theorem}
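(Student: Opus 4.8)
The plan is to establish the cycle of implications $(3)\Rightarrow(2)\Rightarrow(1)\Rightarrow(3)$, exploiting the sandwich $C_0^{\NS}\leq C\leq K_0^{\NS}$ from (\ref{sandwich}) so that conditions $(1)$ and $(2)$ each pin $C$ to one end of the sandwich, and using the combinatorial formulas $C_0^{\NS}=\log\alpha^*(H(\mc{E}))$ and $K_0^{\NS}=\log\sum_{y}\max_x\mc{E}(y|x)$ together with the characterisation (\ref{V0}) of zero dispersion.

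\medskip

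\noindent\textbf{$(3)\Rightarrow(2)$.} Assume $V=0$, so there is a capacity-achieving input distribution $p$ with induced output distribution $q$ such that (\ref{V0}) holds for all $x$ with $p(x)>0$: $\mc{E}(y|x)=\lceil\mc{E}(y|x)\rceil q(y)2^{C}$. I would compute $\sum_{y\in\outA}\max_{x\in\inA}\mc{E}(y|x)$ directly. For each fixed $y$, among the $x$ with $p(x)>0$ the value $\mc{E}(y|x)$ is either $0$ or exactly $q(y)2^C$; and one must check that no $x$ with $p(x)=0$ can exceed this, which follows because such an $x$ would give information density strictly exceeding $C$ at $(x,y)$, contradicting that $C$ is the \emph{maximum} of the mutual information (a supporting-hyperplane / KKT argument for the capacity optimisation: for every $x$, $\sum_y \mc{E}(y|x)\log(\mc{E}(y|x)/q(y))\le C$, with equality when $p(x)>0$, and combined with (\ref{V0}) on the support this forces $\mc{E}(y|x)\le\lceil\mc{E}(y|x)\rceil q(y)2^C$ for all $x$). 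Hence $\max_x\mc{E}(y|x)=q(y)2^C$ for every $y$ in the support of $q$, and summing gives $\sum_y\max_x\mc{E}(y|x)=2^C\sum_y q(y)=2^C$, i.e.\ $K_0^{\NS}=C$.

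\medskip

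\noindent\textbf{$(2)\Rightarrow(1)$.} Assume $K_0^{\NS}=C$. By the sandwich (\ref{sandwich}), $C\le K_0^{\NS}$ always, and also $C_0^{\NS}\le C$; I need $C_0^{\NS}\ge C$. The natural route is to show $\alpha^*(H(\mc{E}))\ge 2^C$ whenever $\sum_y\max_x\mc{E}(y|x)=2^C$. Using the LP formulations from Proposition \ref{zero-error}, $\alpha^*(H(\mc{E}))=\omega^*(H(\mc{E}))=\min\{\sum_y c_y:\sum_{y}\lceil\mc{E}(y|x)\rceil c_y\ge 1\ \forall x,\ c_y\ge0\}$. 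I would exhibit a feasible covering of small weight, or dually argue that the packing number is large: the candidate is $c_y\propto \max_x\mc{E}(y|x)$ suitably normalised, or better, I would revisit the single-shot LP $\mM{\b}{0}{\mc{E}}$ of Theorem \ref{messagesDual} and observe that $K_0^{\NS}$ being extremal forces the dispersion-type equality (\ref{V0}) on a capacity-achieving distribution, then invoke $(3)\Rightarrow(1)$ below. Cleanly: prove $(2)\Rightarrow(3)$ and $(3)\Rightarrow(1)$ instead, so the overall cycle is $(3)\Rightarrow(2)\Rightarrow(3)$ (equivalence of $(2)$ and $(3)$) plus $(3)\Rightarrow(1)\Rightarrow(3)$.

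\medskip

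\noindent\textbf{$(3)\Rightarrow(1)$.} Assume $V=0$ with $p,q$ as above. By multiplicativity of $\alpha^*$ and $C_0^{\NS}=\log\alpha^*(H(\mc{E}))$, it suffices to show $\alpha^*(H(\mc{E}))\ge 2^C$ (the reverse, $\le$, follows from the sandwich). I would verify that $v_x := p(x)\,2^{C}$ is a feasible fractional packing: nonnegativity is clear, and for each hyperedge $e_y=\{x:\mc{E}(y|x)>0\}$, $\sum_{x\in e_y} v_x = 2^C\sum_{x:\mc{E}(y|x)>0}p(x)$; using (\ref{V0}) in the form $\lceil\mc{E}(y|x)\rceil q(y)2^C = \mc{E}(y|x)$ on the support of $p$, the sum $\sum_{x\in e_y, p(x)>0} p(x)\cdot\lceil\mc{E}(y|x)\rceil 2^C = \sum_x p(x)\mc{E}(y|x)/q(y)\cdot\mathbf{1}[p(x)>0] = q(y)/q(y)=1$; one must handle $x$ with $\mc{E}(y|x)>0$ but $p(x)=0$ (they contribute $0$ to $v$), so $\sum_{x\in e_y}v_x\le 1$. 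Thus $\alpha^*(H(\mc{E}))\ge\sum_x v_x = 2^C\sum_x p(x)=2^C$, giving $C_0^{\NS}\ge C$, hence equality.

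\medskip

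\noindent\textbf{$(1)\Rightarrow(3)$ and $(2)\Rightarrow(3)$.} For the converses, assume $C_0^{\NS}=C$ (resp.\ $K_0^{\NS}=C$) and derive (\ref{V0}). The cleanest single-letter argument: $\alpha^*(H(\mc{E}))=2^{C_0^{\NS}}=2^C$ means there is an optimal fractional packing $v$ with $\sum_x v_x = 2^C$ and $\sum_{x\in e_y}v_x\le1$ for all $y$; define $p(x):=v_x/2^C$, a probability distribution, with induced output distribution $q$. Then $q(y)=\sum_x p(x)\mc{E}(y|x)\le 2^{-C}\sum_{x\in e_y}v_x\le 2^{-C}$ wait --- more carefully, $\mc{E}(y|x)\le\lceil\mc{E}(y|x)\rceil$, so $q(y)=\sum_x p(x)\mc{E}(y|x)\le\sum_x p(x)\lceil\mc{E}(y|x)\rceil = 2^{-C}\sum_{x\in e_y}v_x\le 2^{-C}$, whence $\sum_y q(y)=1\le\sum_y 2^{-C}$... this needs the covering side too; I would instead use complementary slackness between the optimal packing $v$ and an optimal covering $c$ to conclude $\sum_{x\in e_y}v_x=1$ exactly when $c_y>0$ and $\sum_{y\ni x}c_y=1$ exactly when $v_x>0$, and chase these equalities to show that on the support of $p$ the information density $i(x;y)$ is constant (equal to $C$) wherever $\mc{E}(y|x)p(x)>0$, which is precisely (\ref{V0}); it remains only to check $p$ is capacity-achieving, which follows because its mutual information equals $\sum_{x,y}p(x)\mc{E}(y|x)\log(\mc{E}(y|x)/q(y)) = C$. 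The $K_0^{\NS}=C$ case is analogous using the dual LP for $\omega^*$ and the formula $K_0^{\NS}=\log\sum_y\max_x\mc{E}(y|x)$: the extremal value forces $\max_x\mc{E}(y|x)=q(y)2^C$ for a suitable $q$, and then picking $x$ achieving the max for each $y$ and any input distribution supported on those $x$ recovers (\ref{V0}).

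\medskip

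\noindent\textbf{Main obstacle.} The routine directions are $(3)\Rightarrow(1)$ and $(3)\Rightarrow(2)$, which are essentially the packing/covering computations above. The delicate part is the converse $(1)\Rightarrow(3)$ (and $(2)\Rightarrow(3)$): one must extract from the \emph{optimality} of a fractional packing/covering a genuine capacity-achieving input distribution and verify the pointwise equality (\ref{V0}) on its support. The technical heart is a complementary-slackness argument linking the optimal LP solutions to the KKT conditions of the (concave) capacity optimisation, and one must be careful about $x$ outside the support of $p$ and about output symbols $y$ with $q(y)=0$. I expect that handling these degenerate cases cleanly, rather than any deep new idea, is what the proof must get right.
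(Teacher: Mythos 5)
Your decomposition is the same as the paper's: (3)$\Rightarrow$(1) and (3)$\Rightarrow$(2) via the characterisation (\ref{V0}), and the converses (1)$\Rightarrow$(3), (2)$\Rightarrow$(3) proved separately; your packing argument for (3)$\Rightarrow$(1) (checking $v_x=p(x)2^C$ is a fractional packing, with $q(y)=0$ handled separately) is exactly the paper's and is fine. The real trouble is in the other directions. For (3)$\Rightarrow$(2) you correctly notice that (\ref{V0}) is only available on the support of $p$, but your patch is a non sequitur: the KKT condition $\sum_y\mc{E}(y|x)\log(\mc{E}(y|x)/q(y))\le C$ bounds an \emph{average} and does not force the pointwise bound $\mc{E}(y|x)\le\lceil\mc{E}(y|x)\rceil q(y)2^C$ for off-support $x$; one large entry can be compensated by entries of negative information density. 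Concretely, take $\inA=\{a,b,u\}$, $\outA=\{1,2,3,4\}$ with $\mc{E}(\cdot|a)=(\tfrac12,\tfrac12,0,0)$, $\mc{E}(\cdot|b)=(0,0,\tfrac12,\tfrac12)$, $\mc{E}(\cdot|u)=(0.7,0.1,0.1,0.1)$: the unique capacity-achieving input distribution is uniform on $\{a,b\}$, $C=1$, the information density is identically $1$ on its support so $V=0$, yet $\mc{E}(1|u)=0.7>q(1)2^C=0.5$ and $\log\sum_y\max_x\mc{E}(y|x)=\log 2.2>C$. So the off-support issue is substantive, not cosmetic, and cannot be closed by the KKT inequality alone; note that the paper's own proof of this direction silently applies (\ref{V0}) inside $\max_x$ and so does not address the point either --- any complete argument here must explicitly deal with unused inputs whose rows have large peaks.

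For (1)$\Rightarrow$(3) your complementary-slackness plan is circular as sketched: slackness between an optimal packing $v$ and an abstract optimal covering $c$ says nothing about the induced output distribution $q$, and the step that does --- showing $\alpha^{\ast}q$ is itself an optimal fractional covering --- uses Shannon's conditions $D(\mc{E}(\cdot|x)\|q)\le C$ with equality on the support, i.e.\ it requires knowing \emph{beforehand} that $p=v/2^C$ is capacity achieving. The paper supplies precisely this missing ingredient with Lemma \ref{astarI} ($I(\mc{E},p)\ge\log\alpha^{\ast}(\mc{E},p)$), whereas you postpone capacity-achievement to the end of the argument, by which point it has already been used. Finally, your (2)$\Rightarrow$(3) is only gestured at; the paper's proof is a short Jensen argument, $C\le\log\sum_{x,y}p(x)\mc{E}(y|x)^2/q(y)\le K_0^{\NS}$ with equality in Jensen forcing constant information density on the support, which is simpler and complete and which you would do better to adopt than the LP-duality route you sketch.
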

\begin{proof}
The following propositions show
that (3) implies (1) and (2); that (1) implies (3); and that (2)
implies (3).
\end{proof}

\begin{proposition}
	If $V = 0$ then
	$C_0^{\NS}$ and
	$K_0^{\NS}$ are
	both equal to $C$.
\end{proposition}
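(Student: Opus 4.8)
The plan is to exploit the characterisation (\ref{V0}): fix a capacity-achieving input distribution $p$ whose induced output distribution $q$ makes the information density constant on the support of the joint distribution, so that $\mc{E}(y|x) = \lceil \mc{E}(y|x) \rceil\, q(y)\, 2^{C}$ for every $x$ with $p(x) > 0$. Recall that, for a DMC, the zero-error NS capacity equals $\log \alpha^{*}(H(\mc{E}))$ and the exact NS simulation cost equals $\log \sum_{y \in \outA} \max_{x \in \inA} \mc{E}(y|x)$, while (\ref{sandwich}) already gives $C_{0}^{\NS} \leq C \leq K_{0}^{\NS}$. So it is enough to prove the two inequalities $\alpha^{*}(H(\mc{E})) \geq 2^{C}$ and $\sum_{y \in \outA} \max_{x \in \inA} \mc{E}(y|x) \leq 2^{C}$: the first forces $C_{0}^{\NS} = C$ and the second forces $K_{0}^{\NS} = C$.

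For $\alpha^{*}(H(\mc{E})) \geq 2^{C}$ I would take the candidate fractional packing $v_{x} := 2^{C} p(x)$ of $H(\mc{E})$. Non-negativity is clear, and $v_{x} \leq 1$ is automatic because every vertex lies in some hyperedge. For a hyperedge $e_{y} = \{ x : \mc{E}(y|x) > 0 \}$, only vertices with $p(x) > 0$ contribute to $\sum_{x \in e_{y}} v_{x}$, and for these (\ref{V0}) gives $\mc{E}(y|x) = q(y) 2^{C}$ (the ceiling is $1$); hence when $q(y) > 0$ one has $\sum_{x \in e_{y}} v_{x} = 2^{C} \sum_{x \in e_{y},\, p(x) > 0} p(x) = \frac{1}{q(y)} \sum_{x} p(x) \mc{E}(y|x) = 1$, where the middle step is (\ref{V0}), and the sum is $0$ when $q(y) = 0$. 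So $v$ is a feasible fractional packing of total weight $2^{C} \sum_{x} p(x) = 2^{C}$, whence $\alpha^{*}(H(\mc{E})) \geq 2^{C}$ and, with (\ref{sandwich}), $C_{0}^{\NS} = C$. A tidy dual witness of the matching bound $\alpha^{*}(H(\mc{E})) \leq 2^{C}$ is the fractional covering $c_{y} := q(y) 2^{C}$ of total weight $2^{C}$: summing (\ref{V0}) over $y$ shows it covers the support of $p$ with equality, and the log-sum inequality applied to the standard capacity-optimality condition $D(\mc{E}(\cdot|x) \| q) \leq C$ gives $\sum_{y : \mc{E}(y|x) > 0} q(y) \geq 2^{-C}$, so it covers every remaining input too.

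For $\sum_{y} \max_{x} \mc{E}(y|x) \leq 2^{C}$ the natural route is the pointwise estimate $\mc{E}(y|x) \leq q(y) 2^{C}$ for all $x \in \inA$, $y \in \outA$; summing over $y$ then gives $\sum_{y} \max_{x} \mc{E}(y|x) \leq 2^{C} \sum_{y} q(y) = 2^{C}$, hence $K_{0}^{\NS} \leq C$, and equality follows from (\ref{sandwich}). For $x$ in the support of $p$ this estimate is exactly (\ref{V0}). The inputs $x$ with $p(x) = 0$ are the crux, and I expect this to be the main obstacle: for such $x$ one has only the first-order condition $D(\mc{E}(\cdot|x) \| q) \leq C$, so the pointwise bound must be extracted from it together with the rigidity that (\ref{V0}) imposes on $q$ (every row's support lies in the support of $q$, and on that support $q(y) 2^{C}$ is a transition probability of some row used by $p$, so $q(y) \leq 2^{-C}$ there). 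Establishing this off-support estimate — equivalently, ruling out that an unused input has $\mc{E}(y|x) > q(y) 2^{C}$ for some $y$ — is where the real content of the proposition lies; granting it, both $C_{0}^{\NS} = C$ and $K_{0}^{\NS} = C$ follow at once.
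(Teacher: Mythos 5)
Your first half is correct and is essentially the paper's own argument: you exhibit $2^{C}p(x)$ as a fractional packing of $H(\mc{E})$, using (\ref{V0}) only at points with $p(x)>0$, which together with (\ref{sandwich}) gives $C_0^{\NS}=C$. (The extra ``dual witness'' covering is harmless but unnecessary, since (\ref{sandwich}) already supplies $C_0^{\NS}\le C$.) The second half, however, is not a proof: you reduce $K_0^{\NS}\le C$ to the pointwise bound $\mc{E}(y|x)\le q(y)2^{C}$ for \emph{all} $x$, verify it only on the support of $p$, and explicitly defer the case $p(x)=0$. Since the maximum in $K_0^{\NS}=\log\sum_{y}\max_{x}\mc{E}(y|x)$ may be attained at an unused input, your proposal leaves the $K_0^{\NS}$ statement unestablished.

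You have, however, put your finger on exactly the step that the paper's own proof glosses over: it writes $\max_x \mc{E}(y|x)=\max_x\lceil\mc{E}(y|x)\rceil q(y)2^{C}$, i.e.\ it invokes (\ref{V0}) for every $x$, although $V=0$ only yields (\ref{V0}) for $x$ with $p(x)>0$. Moreover, the missing off-support estimate cannot be extracted from the first-order condition $D(\mc{E}(\cdot|x)\|q)\le C$ alone. Consider $\inA=\{1,2,3\}$, $\outA=\{1,2,3,4\}$ with rows $(\tfrac12,\tfrac12,0,0)$, $(0,0,\tfrac12,\tfrac12)$ and $(0.6,0.2,0.1,0.1)$: here $C=1$ with unique capacity-achieving input distribution $(\tfrac12,\tfrac12,0)$ and uniform output distribution $q$ (since $D(\mc{E}(\cdot|3)\|q)\approx 0.43<1$), the information density equals $C$ everywhere on the support, so $V=0$ in the sense defined in the paper; yet $\sum_y\max_x\mc{E}(y|x)=2.1>2^{C}$, so $K_0^{\NS}>C$ (while $\alpha^{\ast}(H(\mc{E}))=2$, so $C_0^{\NS}=C$ still holds). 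So the obstacle you identify is real and not a matter of missing technique: with $V$ taken as the minimum variance over capacity-achieving distributions, $V=0$ does not force $\mc{E}(y|x)\le q(y)2^{C}$ at inputs unused by every such distribution, and the $K_0^{\NS}$ half of the proposition needs the stronger hypothesis that (\ref{V0}) hold for all $x\in\inA$. Your packing argument, and the converse implications ($C_0^{\NS}=C$ or $K_0^{\NS}=C$ imply $V=0$) proved later in the section, are unaffected by this issue.
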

\begin{proof}
	We show that if $V(\mc{E}) = 0$ then the opposite inequalities to those
	in (\ref{sandwich}) also hold. Using (\ref{V0}),
	\begin{align}
		K^{\rm{NS}}_0
		=&\log \sum_{y\in \outA} \max_x \mc{E}(y|x)\\
		=&\log \sum_{y\in \outA} \max_x \lceil \mc{E}(y|x) \rceil
		q(y) 2^{C} \leq C.
	\end{align}
	For the other part, when $q(y)$ is non-zero
	\begin{align}
		\sum_{x\in \inA} \lceil \mc{E}(y|x) \rceil p(x) 2^{C}
		= \sum_{x\in \inA} \frac{\mc{E}(y|x)}{q(y)} p(x) = 1,
	\end{align}
	and when $q(y)$ is zero we must have $\lceil \mc{E}(y|x) \rceil p(x) = 0$
	for all $x \in \inA$ and
	\begin{align}
		\sum_{x\in \inA} \lceil \mc{E}(y|x) \rceil p(x) 2^{C} = 0.
	\end{align}
	Therefore $p(x) 2^{C}$ is a fractional packing, and
	$C^{\NS} \geq C$.
\end{proof}

\begin{definition}
	\begin{equation}
		\alpha^{\ast}(\mc{E},p) := \max \{ \alpha : \forall y~\sum_{x}
		\lceil \mc{E}(y|x) \rceil \alpha p(x) \leq 1 \},
	\end{equation}
	which is equivalent to
	\begin{equation}
		\alpha^{\ast}(\mc{E},p) = \frac{1}{\max_y \sum_{x\in \inA} 
		\lceil \mc{E}(y|x) \rceil p(x) }.
	\end{equation}
\end{definition}
Clearly the fractional packing number is given by
$\alpha^{\ast}(\mc{E}) = \max_p \alpha^{\ast}(\mc{E},p)$,
where the maximum is over probability distributions $p$ on the input alphabet.

\begin{lemma}\label{astarI}
	Let $I(\mc{E},p)$ denote the mutual information between channel input and
	output when the input has probability mass function $p$.
	\begin{equation}
		I(\mc{E},p) \geq \log \alpha^{\ast}(\mc{E},p)
	\end{equation}
\end{lemma}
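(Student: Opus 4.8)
The plan is to bound the mutual information $I(\mc{E},p)$ from below by $\log \alpha^{\ast}(\mc{E},p)$ by exhibiting a suitable choice of output distribution against which to compare. Recall the variational characterization: for any distribution $r \in \mathcal{P}(\outA)$,
\begin{equation*}
	I(\mc{E},p) = \min_{r} \sum_{x \in \inA} p(x) \sum_{y \in \outA} \mc{E}(y|x) \log \frac{\mc{E}(y|x)}{r(y)},
\end{equation*}
with the minimum attained at $r = q$, the true induced output distribution. Hence for \emph{any} valid $r$ we get $I(\mc{E},p) \leq \sum_{x,y} p(x)\mc{E}(y|x) \log \frac{\mc{E}(y|x)}{r(y)}$; but this is the wrong direction. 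Instead I would work directly with $q$ and relate $q(y)$ to the quantity $\sum_{x} \lceil \mc{E}(y|x)\rceil p(x)$ appearing in the definition of $\alpha^{\ast}(\mc{E},p)$. The key observation is that $q(y) = \sum_x \mc{E}(y|x)p(x) \leq \sum_x \lceil \mc{E}(y|x)\rceil p(x)$, since $\mc{E}(y|x) \leq \lceil \mc{E}(y|x)\rceil = 1$ whenever $\mc{E}(y|x) > 0$.

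The main step is then a direct estimate. Write $s(y) := \sum_{x \in \inA} \lceil \mc{E}(y|x)\rceil p(x)$, so that $\alpha^{\ast}(\mc{E},p) = 1/\max_y s(y)$ and $q(y) \leq s(y)$ for all $y$. I would then compute
\begin{align*}
	I(\mc{E},p) &= \sum_{x \in \inA} p(x) \sum_{y \in \outA} \mc{E}(y|x) \log \frac{\mc{E}(y|x)}{q(y)} \\
	&\geq \sum_{x \in \inA} p(x) \sum_{y \in \outA} \mc{E}(y|x) \log \frac{\mc{E}(y|x)}{s(y)},
\end{align*}
using $q(y) \leq s(y)$ and monotonicity of $\log$, valid term-by-term since $\mc{E}(y|x) \geq 0$ (terms with $\mc{E}(y|x)=0$ contribute nothing). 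Now one wants to see this last sum is at least $\log \alpha^{\ast}(\mc{E},p) = -\log \max_y s(y)$. Since $\mc{E}(y|x) \leq 1$ on its support we have $\log \mc{E}(y|x) \geq 0$ is false — rather $\log \mc{E}(y|x) \leq 0$ — so I need the other piece: $-\log s(y) \geq -\log \max_{y'} s(y')$ for every $y$. Combining, each term satisfies $\mc{E}(y|x)\log\frac{\mc{E}(y|x)}{s(y)} \geq \mc{E}(y|x)\bigl(\log \mc{E}(y|x) - \log \max_{y'} s(y')\bigr)$, and then I sum; but the $\log \mc{E}(y|x)$ part is negative, which is the wrong sign. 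The cleaner route is to instead lower-bound using a different comparison distribution: set $r(y) := s(y)/Z$ with $Z = \sum_y s(y)$, a genuine probability distribution, and apply the variational inequality $I(\mc{E},p) \leq \sum_{x,y} p(x)\mc{E}(y|x)\log\frac{\mc{E}(y|x)}{r(y)}$ in reverse is still not available.

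The honest approach, and the one I expect to be the crux, is: since for each $x$, $\sum_y \mc{E}(y|x) = 1$ and $\mc{E}(\cdot|x)$ is supported on $\{y : \lceil \mc{E}(y|x)\rceil = 1\}$, by concavity of $\log$ (Jensen, or equivalently nonnegativity of relative entropy between $\mc{E}(\cdot|x)$ and the normalized restriction of a measure) one gets
\begin{equation*}
	\sum_{y} \mc{E}(y|x) \log \frac{\mc{E}(y|x)}{s(y)} \geq -\log \sum_{y : \mc{E}(y|x) > 0} s(y) \geq -\log \sum_{y \in \outA} s(y),
\end{equation*}
and then note $\sum_y s(y) = \sum_y \sum_x \lceil \mc{E}(y|x)\rceil p(x) = \sum_x p(x)\sum_y \lceil \mc{E}(y|x)\rceil$, which is not obviously $\leq \max_y s(y)$. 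So the first inequality in this display is too lossy. The right normalization is to compare $\mc{E}(\cdot|x)$ against $s(\cdot)$ directly without summing all of $s$: by the log-sum inequality (or nonnegativity of KL divergence after normalizing $s$ restricted to the support of $\mc{E}(\cdot|x)$ to a probability distribution), $\sum_y \mc{E}(y|x)\log\frac{\mc{E}(y|x)}{s(y)} \geq \log \frac{1}{\sum_{y:\mc{E}(y|x)>0} s(y)}$ — still the same obstruction. I therefore expect the actual proof instead uses the \emph{per-$y$} bound $s(y) \leq \max_{y'} s(y') = 1/\alpha^{\ast}(\mc{E},p)$ combined with $\sum_x p(x)\sum_y \mc{E}(y|x) = 1$: namely
\begin{equation*}
	I(\mc{E},p) \geq \sum_{x}p(x)\sum_y \mc{E}(y|x)\log\frac{1}{s(y)} - \underbrace{\Bigl(-\sum_x p(x)\sum_y \mc{E}(y|x)\log\mc{E}(y|x)\Bigr)}_{\geq 0}
\end{equation*}
does not work because the subtracted term has the wrong sign. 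The genuinely correct and clean argument, which I would write out, is: $I(\mc{E},p) = \sum_{x,y} p(x)\mc{E}(y|x)\log\frac{\mc{E}(y|x)}{q(y)} \geq \sum_{x,y}p(x)\mc{E}(y|x)\log\frac{\lceil\mc{E}(y|x)\rceil}{q(y)}$ since $\mc{E}(y|x) = \lceil\mc{E}(y|x)\rceil$ wherever $\mc{E}(y|x)>0$ would give equality — so actually $I(\mc{E},p) = \sum_{x,y}p(x)\mc{E}(y|x)\log\frac{\lceil\mc{E}(y|x)\rceil}{q(y)}$, and since $q(y) \leq s(y) \leq \max_{y'}s(y')$ this is $\geq \sum_{x,y}p(x)\mc{E}(y|x)\log\frac{\lceil\mc{E}(y|x)\rceil}{\max_{y'}s(y')} = \log\frac{1}{\max_{y'}s(y')} = \log\alpha^{\ast}(\mc{E},p)$, using $\sum_{x,y}p(x)\mc{E}(y|x)\lceil\mc{E}(y|x)\rceil$-weighted... — here the point is that $\mc{E}(y|x)\log\lceil\mc{E}(y|x)\rceil = 0$ always (it is $0$ if $\mc{E}(y|x)=0$ and $\log 1 = 0$ otherwise), so the sum telescopes to $\sum_{x,y}p(x)\mc{E}(y|x)\log\frac{1}{\max_{y'}s(y')} = \log\frac{1}{\max_{y'}s(y')}$. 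The main obstacle is getting this sign bookkeeping exactly right; once one notices $\mc{E}(y|x)\log\lceil\mc{E}(y|x)\rceil \equiv 0$ and $q \leq s \leq \max s$ pointwise, the lemma follows in two lines.
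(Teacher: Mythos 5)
Your final ``clean argument'' rests on a false identity and a resulting sign error. You claim $\mc{E}(y|x) = \lceil\mc{E}(y|x)\rceil$ wherever $\mc{E}(y|x)>0$; this holds only for channels whose transition probabilities are $0$ or $1$. In general $\mc{E}(y|x)\leq\lceil\mc{E}(y|x)\rceil$, so replacing $\mc{E}(y|x)$ by $\lceil\mc{E}(y|x)\rceil$ inside the logarithm gives an \emph{upper} bound on $I(\mc{E},p)$, not a lower bound. Writing $s(y):=\sum_{x\in\inA}\lceil\mc{E}(y|x)\rceil p(x)$ and $q(y):=\sum_{x\in\inA}\mc{E}(y|x)p(x)$, what your chain actually establishes is $\sum_{x,y}p(x)\mc{E}(y|x)\log\frac{\lceil\mc{E}(y|x)\rceil}{q(y)}=H(Y)\geq \log\alpha^{\ast}(\mc{E},p)$ together with $I(\mc{E},p)\leq H(Y)$, which does not imply the lemma. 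The quantity you silently drop with the wrong sign is exactly $H(Y|X)=-\sum_{x,y}p(x)\mc{E}(y|x)\log\mc{E}(y|x)\geq 0$. A concrete counterexample to your claimed identity: $\inA=\outA=\{0,1\}$ with $\mc{E}(y|x)=1/2$ for all $x,y$ and uniform $p$; then $I(\mc{E},p)=0$ while your ``equal'' expression is $H(Y)=1$ (the lemma itself holds there with equality, since $\log\alpha^{\ast}(\mc{E},p)=0$).

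The missing idea --- and the crux of the paper's proof --- is how to absorb the $H(Y|X)$ loss. The paper first bounds $\log\alpha^{\ast}(\mc{E},p)=-\log\max_{y}s(y)\leq-\sum_{y}q(y)\log s(y)$, averaging under the true output distribution $q$ rather than taking the maximum, and then shows $-\sum_{y}q(y)\log s(y)\leq I(\mc{E},p)$ by writing the difference as $\sum_{x,y}p(x)\mc{E}(y|x)\log\frac{q(y)}{\mc{E}(y|x)\,s(y)}$ and applying $\log t\leq(t-1)/\ln 2$: for each fixed $y$ the resulting expression vanishes exactly because $\sum_{x:\mc{E}(y|x)>0}p(x)=s(y)$ cancels against $q(y)$. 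That exact cancellation is what compensates the conditional-entropy term; equivalently, the slack $\sum_y q(y)\log\frac{s(y)}{q(y)}$ between $H(Y)$ and $-\sum_y q(y)\log s(y)$ is at least $H(Y|X)$. None of the routes you tried (pointwise $q\leq s\leq\max_{y'}s(y')$ inside the log, the log-sum bound over the support of $\mc{E}(\cdot|x)$, or the ceiling substitution) produces this cancellation, and, as you noted yourself for several of them, they are either too lossy or point in the wrong direction; the last one, which you present as the proof, is simply invalid.
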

\begin{proof}
	Let $q(y) = \sum_{x \in \inA} p(x) \mc{E}(y|x)$.
	\begin{align}
		\log \alpha^{\ast}(\mc{E},p) 
		&= - \max_{y \in \outA} \log \sum_{x\in \inA}
		\lceil \mc{E}(y|x) \rceil p(x)\\
		&\leq - \sum_{y \in \outA} q(y) \log
		\sum_{x\in \inA} \lceil \mc{E}(y|x) \rceil p(x)\\
		&= - \sum_{y\in \outA}\sum_{x\in \inA} \mc{E}(y|x) p(x) \log
		\sum_{x'} \lceil \mc{E}(y|x') \rceil p(x')
	\end{align}
	Subtracting $I(\mc{E},p)$ from this last expression one obtains
	\begin{equation}
		\sum_{y\in \outA}\sum_{x\in \inA}
		\mc{E}(y|x) p(x) \log \frac{\sum_{x''}
		\mc{E}(y|x'')p(x'')}
		{\mc{E}(y|x) \sum_{x'} \lceil \mc{E}(y|x') \rceil p(x')}
	\end{equation}
	which is never larger than zero because,
	using $\log x \leq (x - 1)/(\ln 2)$,
	\begin{align}
		& \sum_{x : \mc{E}(y|x) > 0} \mc{E}(y|x) p(x)
		\log \frac{\sum_{x''} \mc{E}(y|x'')p(x'')}
		{\mc{E}(y|x) \sum_{x'} \lceil \mc{E}(y|x') \rceil p(x')}\\
		\leq & \sum_{x : \mc{E}(y|x) > 0} \frac{\mc{E}(y|x) p(x)}{\ln 2}
		\left(\frac{\sum_{x''} \mc{E}(y|x'')p(x'')}
		{\mc{E}(y|x) \sum_{x'}
		\lceil \mc{E}(y|x') \rceil p(x')} - 1\right)\\
		\begin{split} =&
		\bigg(\frac{\sum_{x''} \mc{E}(y|x'')p(x'')}
		{\sum_{x'} \lceil \mc{E}(y|x') \rceil p(x')}
		\sum_{x} \lceil \mc{E}(y|x) \rceil p(x)\\
		&- \sum_{x\in \inA} \mc{E}(y|x) p(x)\bigg)/(\ln 2)\end{split}\\
		= & 0.
	\end{align}
\end{proof}

\begin{proposition}
	If $C^{\rm{NS}}_0 = C$ then $V = 0$.
\end{proposition}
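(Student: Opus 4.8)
The plan is to apply Lemma~\ref{astarI} at an input distribution that attains the fractional packing number, use the hypothesis to collapse the resulting chain of inequalities, and then read off the equality conditions, which will turn out to be precisely condition~(\ref{V0}).

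First I would recall that, by the identity stated at the start of this section, $C_0^{\NS} = \log \alpha^{\ast}(H(\mc{E})) = \log \alpha^{\ast}(\mc{E})$, and that $\alpha^{\ast}(\mc{E}) = \max_{p} \alpha^{\ast}(\mc{E},p)$ over input distributions $p$. Fix a $p$ attaining this maximum and let $q(y) = \sum_{x \in \inA} p(x)\mc{E}(y|x)$ be its induced output distribution. Then Lemma~\ref{astarI} gives
\[
	C \;\geq\; I(\mc{E},p) \;\geq\; \log \alpha^{\ast}(\mc{E},p) \;=\; \log \alpha^{\ast}(\mc{E}) \;=\; C_0^{\NS},
\]
so the hypothesis $C_0^{\NS} = C$ forces every inequality above to be an equality. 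In particular $I(\mc{E},p) = C$, so $p$ is capacity-achieving, and $I(\mc{E},p) = \log \alpha^{\ast}(\mc{E},p)$, i.e.\ equality holds throughout the proof of Lemma~\ref{astarI}.

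Next I would extract what that equality forces, by tracing the two inequalities used in the proof of Lemma~\ref{astarI}. The first bounds $\max_{y \in \outA} \log \sum_{x \in \inA} \lceil \mc{E}(y|x)\rceil p(x)$ below by its $q$-weighted average over $y$; equality there forces $\sum_{x \in \inA} \lceil \mc{E}(y|x)\rceil p(x)$ to equal its maximum over $y$, namely $1/\alpha^{\ast}(\mc{E},p) = 2^{-C}$, for every $y$ with $q(y) > 0$. The second is $\log t \leq (t-1)/\ln 2$, applied inside a sum whose $(x,y)$ terms carry weight $\mc{E}(y|x) p(x)$; equality there forces, for every pair with $\mc{E}(y|x) p(x) > 0$, that $\mc{E}(y|x) \sum_{x'} \lceil \mc{E}(y|x')\rceil p(x') = \sum_{x''} \mc{E}(y|x'') p(x'') = q(y)$.

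Finally I would combine the two conditions. Let $x$ be any symbol with $p(x) > 0$. If $\mc{E}(y|x) > 0$ then $q(y) \geq \mc{E}(y|x)p(x) > 0$, so both conditions apply and yield $\mc{E}(y|x)\, 2^{-C} = q(y)$, i.e.\ $\mc{E}(y|x) = q(y)\, 2^{C} = \lceil \mc{E}(y|x)\rceil q(y)\, 2^{C}$; if instead $\mc{E}(y|x) = 0$, both sides of~(\ref{V0}) vanish. Hence the capacity-achieving distribution $p$ satisfies~(\ref{V0}) for every $x$ in its support, which is precisely the criterion for $V = 0$. I expect the only real obstacle to be bookkeeping: the term-by-term equality condition from $\log t \leq (t-1)/\ln 2$ only constrains pairs with $\mc{E}(y|x)p(x) > 0$, so the cases $\mc{E}(y|x)=0$ (and, implicitly, $q(y)=0$, which forces $\mc{E}(y|x)=0$ on the support of $p$) must be disposed of separately to recover~(\ref{V0}) in the stated form with the ceiling function; no further estimates are needed.
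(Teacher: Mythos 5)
Your proof is correct, and its second half takes a genuinely different route from the paper's. Both arguments begin identically: take $p$ to be a normalised optimal fractional packing, apply Lemma \ref{astarI}, and use $C_0^{\NS}=\log\alpha^{\ast}=C$ to collapse the chain and conclude that $p$ is capacity achieving. From there the paper does not revisit the lemma's proof: it invokes Shannon's condition that $D(\mc{E}(\cdot|x)\,\|\,q)=C$ on the support of a capacity-achieving $p$ (and $\leq C$ off it), uses $\log t\le (t-1)/\ln 2$ to show that $\alpha^{\ast}q(y)$ is an optimal fractional covering, and then obtains the saturation $\mc{E}(y|x)=\alpha^{\ast}q(y)$ on the support via complementary slackness between the optimal packing and covering. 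You instead extract everything from the equality conditions inside the proof of Lemma \ref{astarI} itself: tightness of the max-versus-$q$-average step gives $\sum_{x}\lceil\mc{E}(y|x)\rceil p(x)=2^{-C}$ for all $y$ with $q(y)>0$, and term-wise tightness of $\log t\le(t-1)/\ln 2$ gives $q(y)=\mc{E}(y|x)\sum_{x'}\lceil\mc{E}(y|x')\rceil p(x')$ whenever $\mc{E}(y|x)p(x)>0$; combining these yields (\ref{V0}) directly, and your handling of the $\mc{E}(y|x)=0$ and $q(y)=0$ cases is right. Your route avoids both Shannon's optimality conditions and LP complementary slackness, at the cost of reopening the lemma's proof rather than using only its statement; the paper's route is more modular and incidentally exhibits $\alpha^{\ast}q$ as an optimal fractional covering, which has some independent interest. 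Both arguments are complete.
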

\begin{proof}
Suppose that $\mc{E}(y|x)$ is a channel with
$C^{\rm{NS}}_0 = C$.
Let $w: A \to [0,1]$ be any optimal fractional packing for the channel and
let $\alpha^{\ast}$ be the fractional packing number. By the preceding lemma,
$p(x) = w(x)/\alpha^{\ast}$ defines a capacity achieving input probability
mass function for the channel. Let $q$ be the corresponding output probability
mass function. It was shown in \cite{Shannon} that if $p$ is capacity
achieving then
\begin{equation}
	D(\mc{E}(\cdot|x) || q) \begin{cases}= C \text{ when } p(x) > 0,\\
	\leq C \text{ when } p(x) = 0.
	\end{cases}
\end{equation}
Since, $C = \log \alpha^{\ast}$ by assumption, these conditions imply
that, for all $x \in \inA$,
\begin{align}
	0 \leq \log \alpha^{\ast} - D(\mc{E}(\cdot|x) || q)
	= \sum_{y\in \outA} \mc{E}(y|x) \log \frac
	{ q(y) \alpha^{\ast}}
	{\mc{E}(y|x) }
\end{align}
and, using $\log x \leq (x - 1)/(\ln 2)$ again,
\begin{align}
	0 \leq & \sum_{y \in \outA} \mc{E}(y|x) \lceil \mc{E}(y|x) \rceil
	\left( \frac{q(y) \alpha^{\ast}}
	{\mc{E}(y|x) } - 1 \right)\\
	= & \sum_{y\in \outA}
	\alpha^{\ast} q(y) \lceil \mc{E}(y|x) \rceil - \sum_{y} \mc{E}(y|x)\\
	= & \sum_{y\in \outA}
	\alpha^{\ast} q(y) \lceil \mc{E}(y|x) \rceil - 1.
\end{align}
Therefore, $v_y := \alpha^{\ast} q(y)$ is a fractional covering
for the channel hypergraph, and it is optimal. Furthermore,
the complementary slackness condition demands that
when $p(x) > 0$ the corresponding inequality must be
saturated. Therefore, when $\mc{E}(y|x)p(x) > 0$ we must have
$\frac
{q(y) \alpha^{\ast}	}
{\mc{E}(y|x) } - 1 = 0$ or $\log \mc{E}(y|x)/q(y) = \log \alpha^{\ast}$
so the variance of the information density is zero for this capacity
achieving distribution.
\end{proof}
\begin{proposition}
	If $K_0^{\NS} = C$ then $V = 0$.
\end{proposition}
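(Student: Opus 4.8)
The plan is to extract the equality conditions from the proof of the inequality $C \leq K_0^{\NS}$ that underlies (\ref{sandwich}). For any input distribution $p$, writing $q(y) = \sum_x p(x)\mc{E}(y|x)$ for the induced output distribution, I would establish the chain
\begin{align*}
	I(\mc{E},p) = \sum_{x,y} p(x)\mc{E}(y|x)\log\frac{\mc{E}(y|x)}{q(y)}
	&\leq \sum_{y : q(y) > 0} q(y)\log\frac{\max_{x} \mc{E}(y|x)}{q(y)}\\
	&\leq \log\sum_{y} \max_{x}\mc{E}(y|x) = K_0^{\NS},
\end{align*}
where the first inequality replaces each $\mc{E}(y|x)$ inside the logarithm by $\max_{x'}\mc{E}(y|x')$ (the resulting surplus $\sum p(x)\mc{E}(y|x)\log(\max_{x'}\mc{E}(y|x')/\mc{E}(y|x))$ being a sum of nonnegative terms) and then collapses the $x$-sum via $\sum_x p(x)\mc{E}(y|x) = q(y)$; the second inequality is Jensen's inequality for the concave function $\log$, followed by $\sum_{y:q(y)>0}\max_x\mc{E}(y|x)\leq\sum_y\max_x\mc{E}(y|x)$.

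Next I would specialize $p$ to be a capacity-achieving input distribution, so that the two ends of the display both equal $C = K_0^{\NS}$ and every inequality is in fact an equality. Tightness then yields three structural facts: (i) discarding the $y$ with $q(y)=0$ costs nothing, so $\mc{E}(y|x) = 0$ for all $x$ whenever $q(y) = 0$; (ii) by strict concavity of $\log$, the ratio $\max_x\mc{E}(y|x)/q(y)$ is constant over all $y$ with $q(y)>0$, and summing over these $y$ — using (i) to account for the remaining $y$ — identifies that constant as $\sum_y\max_x\mc{E}(y|x) = 2^{K_0^{\NS}} = 2^C$; and (iii) equality in the first step forces $\mc{E}(y|x) = \max_{x'}\mc{E}(y|x')$ whenever $p(x)\mc{E}(y|x) > 0$.

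Finally I would combine (i)--(iii) to verify condition (\ref{V0}) for this $p$, which gives $V = 0$. Fix $x$ with $p(x) > 0$ and any $y$: if $\mc{E}(y|x) = 0$ then $\lceil \mc{E}(y|x)\rceil = 0$ and both sides of (\ref{V0}) vanish; if $\mc{E}(y|x) > 0$ then (iii) gives $\mc{E}(y|x) = \max_{x'}\mc{E}(y|x') > 0$, whence $q(y) > 0$ by (i), and then (ii) gives $\mc{E}(y|x) = q(y)\,2^C = \lceil \mc{E}(y|x)\rceil\, q(y)\, 2^C$. Thus (\ref{V0}) holds, and since capacity-achieving distributions exist, $V = 0$.

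The only delicate point I anticipate is the bookkeeping of support sets — keeping $q(y) = 0$ separate from $q(y) > 0$ and $\mc{E}(y|x) = 0$ separate from $\mc{E}(y|x) > 0$ — so that Jensen's equality case is invoked only where the weights are positive, and so that the constant it produces is pinned down as $2^{C}$ rather than merely ``some constant''; that identification is precisely where the hypothesis $K_0^{\NS} = C$ enters, since it equates the constant with $2^{K_0^{\NS}}$.
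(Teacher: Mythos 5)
Your proof is correct and follows essentially the same route as the paper: bound $I(\mc{E},p)\leq \log\sum_{y}\max_{x}\mc{E}(y|x)=K_0^{\NS}$ by combining a maximization step with Jensen's inequality, then use the hypothesis $K_0^{\NS}=C$ to force equality throughout and conclude that the information density equals $C$ wherever $p(x)\mc{E}(y|x)>0$, i.e.\ $V=0$. The only difference is cosmetic: the paper applies Jensen first (to the joint distribution, so its saturation alone gives constancy of $\mc{E}(y|x)/q(y)$ on the support) and the max bound second, whereas you maximize inside the logarithm first and apply Jensen to the output marginal, so your extra equality bookkeeping in (i) and (iii) is just the price of that reordering.
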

\begin{proof}
	Let $p$ be a capacity achieving probability mass function.
	\begin{align}
		C = & \sum_{x\in \inA}\sum_{y\in \outA}
		\mc{E}(y|x) p(x) \log \frac{\mc{E}(y|x)}{\sum_{x'}\mc{E}(y|x')p(x')}\\
		\leq & \log \sum_{x\in \inA}\sum_{y\in \outA}
		\mc{E}(y|x) p(x) \frac{\mc{E}(y|x)}{\sum_{x'}\mc{E}(y|x')p(x')}
		\label{jensen}\\
		\leq & \log \sum_{y\in \outA} \max_{x'' \in \inA}
		\mc{E}(y|x'')
		\frac{\sum_{x} p(x) \mc{E}(y|x)}
		{\sum_{x'}\mc{E}(y|x')p(x')}\label{max}\\
		= & \log \sum_{y\in \outA} \max_{x\in \inA} \mc{E}(y|x)\\
		= & K_0^{\NS}.
	\end{align}
	For equality to hold, Jensen's inequality (\ref{jensen})
	must be saturated. This happens if and only if
	\begin{equation}
		\frac{\mc{E}(y|x)}{\sum_{x'\in \inA}\mc{E}(y|x')p(x')}
	= C
	\end{equation}
	for all $x,y$ such that $\mc{E}(y|x) p(x) > 0$,
	which is equivalent to $V = 0$.
\end{proof}

\section{Conclusion}
It was shown that maximum size of non-signalling code
with a given error probability is given by the integer part
of the solution to a linear program,
and that this is equal to the converse bound of Polyanskiy, Poor and
Verd\'{u}~\cite{PPV},
thus giving an alternative proof of that result.
When $n$ uses of the channel are symmetric under simultaneous permutations
of the input and output strings, the LP can be simplified to one with
$\text{poly}(n)$ variables and constraints.

It was also proven that the capacity of a DMC is achieved with
zero-error by NS codes, if and only if the channel has zero dispersion,
and therefore already admits especially efficient classical codes.

It would be interesting to see if the dual linear programming formulation of
the converse given in this paper can help in extending the finite block length
results given in \cite{PPV}. The technique of using non-signalling assistance
to obtain linear program converses for classical coding protocols extends
naturally to multi-terminal situations like broadcast or multiple access
channels, and may prove useful in this context.

\section*{Acknowledgments}
I would like to thank Andreas Winter, Toby Cubitt and Debbie Leung
for useful discussions.


\end{document}